\documentclass[12pt,english]{article}
\usepackage{textcomp}
\usepackage[utf8]{inputenc}
\usepackage{xcolor}
\usepackage{colortbl}
\usepackage{babel}
\usepackage{float}
\usepackage{amsmath}
\usepackage{amsthm}
\usepackage{amssymb}
\usepackage{geometry}
\usepackage{setspace}
\usepackage[bookmarks=false,
 breaklinks=false,pdfborder={0 0 1},backref=section,colorlinks=true]
 {hyperref}
\hypersetup{
 citecolor=black,linkcolor=black,urlcolor=black}

\makeatletter

\newtheorem{theorem}{Theorem}[section]

\makeatother

\begin{document}
\title{Kinetic metric for basins of attraction of RNA secondary structures and analysis of the ultrametricity of the energy landscape}
\author{ A.\,P.~Zubarev \\
 \textit{Volga State Transport University,} \\
 \textit{Pervyi Bezymyannyi pereulok 18, Samara, 443066, Russia;}
\\
 \textit{Samara University,} \\
 \textit{Moskovskoe shosse 34, Samara, 443123 Russia} \\
 e-mail:\thickspace{}\texttt{apzubarev@mail.ru} }
\maketitle
\begin{abstract}
A method for testing the hypothesis of an ultrametric organization of the energy landscape of RNA secondary structures is proposed, based on the analysis of the kinetics of transitions between basins of attraction. The method is based on a kinetic metric constructed from the spectral decomposition of the symmetrized Kramers transition rate matrix and the Mahalanobis distance, and it is not an ultrametric by construction. A computational scheme has been developed that includes automatic filtering of noise eigenmodes and a procedure for analyzing disconnected structure graphs. The performance of the method is demonstrated on a sample of reference and random RNAs.
\end{abstract}

\section{Introduction}

\label{sec_intr}

The hierarchical organization of complex systems, in which elements are combined into groups, groups into larger groups, and so on, finds a natural mathematical description in terms of ultrametric spaces. A review of the mathematical foundations of ultrametricity and its applications is given in \cite{rammal1986}. Recall that a metric $d$ on a set $M$ is called an ultrametric if for any three points $x,y,z\in M$ the strong triangle inequality holds: $d(x,y)\le\max\{d(x,z),d(z,y)\}$. In such a space, any two open balls either do not intersect, or one is contained within the other, and every finite ultrametric space is isomorphic to the set of leaves of a rooted tree with a suitable height function \cite{Hartigan}. This property makes the ultrametric a central tool for hierarchical clustering \cite{Sneath} and phylogenetic analysis \cite{Lake}.

A turning point in the physics of disordered systems was the discovery of the ultrametric structure of the state space in the Sherrington-Kirkpatrick spin glass model \cite{parisi1979,parisi1980,mezard1984,mezard1987}. Rigorous mathematical proofs of this fact were obtained later \cite{talagrand2003,talagrand2011,panchenko2013}. It turned out that ultrametricity is not an artificial construction but reflects the real geometry of the energy landscape of systems with many competing minima. Almost simultaneously with the development of spin glass theory, Frauenfelder and co-workers, studying the kinetics of carbon monoxide binding to myoglobin, put forward a hypothesis about the hierarchical organization of the conformational space of proteins \cite{frauenfelder1988,frauenfelder1991,frauenfelder2000}. According to this hypothesis, a protein can exist in a multitude of metastable substates, which are grouped into clusters separated by energy barriers of increasing height. The dynamics of the protein is thereby interpreted as a random walk on an ultrametric tree of substates. Further development of the ideas of ultrametricity led to the application of $p$-adic analysis methods to the description of the energy landscapes of biopolymers \cite{ALL,ALL_1,DKKM}. In \cite{avetisov2002,ABZ_2014,avetisov2009,bikulov2021}, a connection was established between the hierarchy of conformational states of proteins and $p$-adic parametrization, and ultrametric diffusion on such landscapes was investigated. In parallel, $p$-adic methods found application in other areas of mathematical biology: in \cite{khrennikov2007,kozyrev2008,dragovich2009,dragovich2011,dragovich2021}, a $p$-adic model of the genetic code was proposed, and it was shown that many properties of nucleotide sequences, including mutational stability and evolutionary trajectories, allow an interpretation in terms of $p$-adic and ultrametric geometry.

The secondary structure of ribonucleic acids (RNA) represents another class of systems for which the question of the hierarchical organization of the energy landscape is of fundamental importance. As is well known, an RNA molecule consists of a chain of four types of nucleotides (adenine, cytosine, guanine, uracil), which can pair with each other via hydrogen bonds according to the Watson-Crick complementarity principle \cite{brion1997,draper1999,tinoco1971,crothers1974}. The formation of such pairs lowers the free energy of the molecule \cite{freier1986,mathews1999}. The secondary structure of RNA is a set of nucleotide pairs in which each chain position participates in at most one pair, and the pairs themselves do not form pseudoknots \cite{zuker1989,hofacker1994}. The functional repertoire of RNA is exceptionally broad: in addition to classical messenger, transfer, and ribosomal RNA, there are numerous non-coding RNAs performing catalytic, regulatory, and structural functions \cite{golden1998,kiss2002,hainzl2002}. For each of them, the formation of a specific secondary structure is a necessary condition for biological activity, and the folding pathway of the molecule into this structure is determined by the geometry of its energy landscape.

The first study of the ultrametric properties of the energy landscape of RNA secondary structures was undertaken by Higgs \cite{higgs1996}. In a simplified model where only canonical pairs with a unit energy contribution are allowed and no loop penalties exist, it was found that the distribution of overlaps between low-energy states reveals features of an ultrametric organization. This approach was further developed in the joint work of Morgan and Higgs \cite{Morgan1998}, where an explicit procedure for calculating energy barriers between ground states was proposed. A fundamental limitation of this procedure, however, is that the final distance between states is determined by the single-linkage method, which by construction generates an ultrametric. Thus, the question of whether ultrametricity is a real property of the RNA landscape or an artifact of the measurement method remains open.

In our previous work \cite{zubarev2026}, within the framework of a combinatorial model of RNA secondary structures (the Nussinov algorithm \cite{nussinov1978}), the ultrametric properties of the set of ground state configurations (global energy minima) were numerically investigated. It was illustrated that for a number of selected reference small nuclear RNAs (snRNA), the degree of ultrametricity of the set of ground states, determined based on the Jaccard metric \cite{levandowsky1971}, varies widely and can reach 80\% for specially optimized sequences. The results of that work show that within the framework of the model used, the degree of ultrametricity is a characteristic that depends very strongly on the order of nucleotides.

The present work is largely methodological. Its main contribution is the construction of a physically grounded kinetic metric on the space of basins of attraction of RNA secondary structures and the development of a complete computational scheme for its calculation. Here, we do not set the systematic investigation of the ultrametricity of specific RNA sequences as the primary task. The proposed metric is constructed on the set of thermodynamic states of RNA based on the analysis of the kinetics of transitions between structures using the Kramers formalism and the spectral decomposition of the symmetrized transition rate matrix. Such a distance takes into account all possible transition paths and temperature factors, making the verification of ultrametricity a meaningful task whose result is determined by the real structure of the energy landscape. We also present the results of our computational experiments on a sample consisting of a set of representatives of small nuclear RNAs. These results are illustrative in nature and serve two purposes: to demonstrate the performance of the method on real sequences with experimentally determined thermodynamic parameters and to reveal characteristic qualitative regularities that should be considered when planning future systematic studies. We have shown that the degree of nontrivial ultrametricity of the proposed kinetic metric varies widely from approximately $43\%$ to $89\%$ for the selected set of RNAs. For most of the investigated sequences, the structure graph generated by stochastic sampling is essentially disconnected; in this regard, a fragmentation index is introduced in the work, and analysis by individual connectivity components is applied. The data obtained indicate that, although strict hierarchical organization is not observed for any of the investigated RNAs, the energy landscapes possess a partial ultrametric structure, the manifestation of which is specific to each sequence. Individual representatives demonstrate strong ultrametricity (more than 80\%), and a comparative analysis with null models shows that, for a fixed nucleotide composition, the degree of nontrivial ultrametricity is determined by the order of their sequence.

The present article is organized as follows. Section \ref{sec_math} introduces the basic definitions: primary and secondary structure of RNA, free energy, adjacency relation, structure graph, attractor points, basins of attraction, single-linkage distance on the structure graph. Section \ref{sec_spectral} constructs the Kramers transition rate matrix, describes the procedure for its symmetrization preserving the spectrum, and demonstrates its negative semi-definiteness. Section \ref{sec_mahalanobis} constructs the kinetic metric, namely, the Mahalanobis distance between basins of attraction is introduced, and it is proved to be a metric on the corresponding factor space, but not automatically an ultrametric. Section \ref{sec_filter} describes the algorithm for filtering noise modes. Section \ref{sec_ultr} outlines the methodology for determining the degree of ultrametricity of a metric. Section \ref{sec_computational} provides a complete computational scheme for analyzing the degree of ultrametricity of the kinetic metric on the set of basins of attraction of RNA secondary structures, corresponding to the software implementation \cite{zubarev2026_github}. This section also discusses methods for statistical testing of null hypotheses about ultrametricity. Section \ref{sec_results} presents illustrative computational results for 6 reference RNAs from the database \cite{refseq}. Section \ref{sec_conclusion} discusses the advantages, disadvantages, and limitations of the proposed approach, as well as directions for its further development.

\section{Mathematical model of the RNA secondary structure system}

\label{sec_math}

In this section, we will give a mathematical description of the system to be analyzed. Let us define a set of four symbols (nucleotides):
\[
\mathcal{A}=\{\mathrm{A},\mathrm{U},\mathrm{G},\mathrm{C}\}.
\]
The symbol $\mathrm{A}$ denotes adenine, $\mathrm{U}$ denotes uracil, $\mathrm{G}$ denotes guanine, $\mathrm{C}$ denotes cytosine. Let $n$ be a natural number. The primary structure of RNA of length $n$ is a mapping
\[
S:\{1,2,\dots,n\}\to\mathcal{A}.
\]
The elements of the set $\{1,2,\dots,n\}$ are called positions, and the number $n$ is the length of the primary structure. Two positions $i$ and $j$ ($i<j$) are called complementary if the ordered pair $(S(i),S(j))$ belongs to the set
\[
\mathcal{C}=\{(\mathrm{A},\mathrm{U}),(\mathrm{U},\mathrm{A}),(\mathrm{G},\mathrm{C}),(\mathrm{C},\mathrm{G}),(\mathrm{G},\mathrm{U}),(\mathrm{U},\mathrm{G})\}.
\]
The set of all ordered pairs of positions $(i,j)$ with $i<j$ is denoted by $\Pi$:
\[
\Pi=\{(i,j)\in\mathbb{N}^{2}:1\le i<j\le n\}.
\]

As is well known, in statistical physics, a microstate of a molecular system is the complete atomic configuration of the system (the positions of all atoms in space). We do not consider such configurations and do not model them explicitly. Instead, we work with macrostates -- sets of microstates that share the same values of certain macroscopic parameters. In this work, two levels of macrostates are introduced -- structures and basins of attraction. An elementary macrostate (or secondary structure) on the primary structure $S$ is a set $P\subseteq\Pi$ satisfying the following four conditions:

(1) Complementarity condition: for each pair $(i,j)\in P$, positions $i$ and $j$ are complementary, i.e., $(S(i),S(j))\in\mathcal{C}$.

(2) Non-crossing condition: there are no two distinct pairs $(i,j),(k,l)\in P$ such that $i<k<j<l$.

(3) Minimum distance condition: for each pair $(i,j)\in P$, the inequality $j-i\ge h_{\min}+1$ holds, where $h_{\min}\ge3$ is the minimum hairpin loop length (the number of unpaired nucleotides). This condition guarantees the presence of a sufficient number of unpaired nucleotides between paired positions for hairpin loop formation. The standard value is $h_{\min}=3$.

(4) No multiple bonds condition: each position $k\in\{1,\dots,n\}$ is included in at most one pair from $P$.

In the following, the term ``structure'' will be used precisely for such elementary macrostates. Note that each structure $p$ represents a macrostate in the sense that it combines a set of atomic microstates sharing the same secondary structure. The set of all structures on the primary structure $S$ is denoted by $\Omega(S)$.

Each structure $p\in\Omega(S)$ is assigned a real number -- its Gibbs free energy. This work uses the nearest neighbor model \cite{Turner}, which allows the free energy of any secondary structure to be calculated. In this model, each structure $p\in\Omega(S)$ is uniquely decomposed into non-overlapping elementary structural components, which are defined as follows. A structure $p$ can be viewed as a graph whose vertices are the positions $\{1,\dots,n\}$ and edges are the pairs from $p$. The following types of connected components are distinguished in this graph: (1) a stack, which is two ``nested'' pairs $(i,j)$ and $(i+1,j-1)$; (2) a hairpin loop, which is a pair $(i,j)$ where all positions from $i+1$ to $j-1$ are unpaired and whose length $L=j-i-1$ satisfies $L\ge h_{\min}$; (3) an internal loop, which consists of two pairs $(i,j)$ and $(k,l)$ with $i<k<l<j$, where all positions between $i$ and $k$, as well as between $l$ and $j$, are unpaired, and there are no other pairs inside; (4) a multiloop, which is a connected component containing more than two pairs that is not a stack, hairpin loop, or internal loop; (5) a dangling end, which is a set of consecutive unpaired positions at one of the ends of the sequence (from position $1$ to the first paired position or from the last paired position to $n$). Each component $t$ is assigned tabulated enthalpy $\Delta H(t)$ and entropy $\Delta S(t)$ parameters, which also depend on the nucleotide sequence within the component. The free energy of a component at temperature $T$ is calculated by the formula
\[
\Delta G(t,T)=\Delta H(t)-T\Delta S(t).
\]
The parameters $\Delta H$ and $\Delta S$ are obtained experimentally and are tabulated \cite{Turner}. In computer calculations, access to these parameters is implemented by the ViennaRNA library \cite{Vienna}. The free energy of a structure $p\in\Omega(S)$ is the quantity
\[
G(p,T)=\sum_{t\in\text{comp}(p)}\Delta G(t,T),
\]
where $\text{comp}(p)$ is the set of elementary structural components that make up the structure $p$. The values $G(p,T)$ are expressed in kcal/mol, i.e., they are molar free energies.

To analyze the free energy function, it is necessary to introduce the concept of proximity between different structures. This is done using an adjacency relation, which is defined through elementary operations that transform one structure into another. Let $p,q\in\Omega(S)$ be two structures. We say that $p$ and $q$ are adjacent if one of them can be obtained from the other by one of the following three operations: (1) deletion of a pair: there exists a pair $(i,j)\in p$ such that $q=p\setminus\{(i,j)\}$; (2) addition of a pair: there exists a pair $(i,j)\in\Pi\setminus p$ such that $q=p\cup\{(i,j)\}$ and the set $q$ satisfies the conditions for a structure; (3) substitution of a pair: there exist pairs $(i_{1},j_{1})\in p$ and $(i_{2},j_{2})\in\Pi\setminus p$ such that $q=(p\setminus\{(i_{1},j_{1})\})\cup\{(i_{2},j_{2})\}$ and the set $q$ satisfies the conditions for a structure. For the adjacency relation of structures $p$ and $q$, we use the notation $p\leftrightarrow q$. To describe the adjacency relation on the set of structures, it is convenient to use a structure graph, which is an undirected graph $G=(\Omega(S),\mathcal{E})$, where $\Omega(S)$ is the set of vertices (structures), and $\mathcal{E}$ is the set of edges defined by the adjacency relation:
\[
\mathcal{E}=\{\{p,q\}\subseteq\Omega(S):\:p\leftrightarrow q,\:p\neq q\}.
\]

Traditionally, a local minimum of the free energy function $G$ is a structure $p\in\Omega(S)$ for which, for any adjacent structure $q\in\Omega(S)$ (i.e., for any $q$ such that $\{p,q\}\in\mathcal{E}$), the strict inequality $G(p,T)<G(q,T)$ holds. The set of all such strict local minima is denoted by $M(S)\subseteq\Omega(S)$. However, in real energy landscapes, regions of constant free energy -- ``plateaus'' -- where adjacent structures have the same energy, are not uncommon. A structure on a plateau does not satisfy the definition of a strict local minimum, but it is impossible to transition from it to a structure with lower energy. To correctly describe such situations physically, we introduce the concept of an attractor point.

Let a small positive threshold $\varepsilon_{\text{eq}}>0$ be given, accounting for possible numerical errors in calculating the free energy. An attractor point is a set of structures $A\subseteq\Omega(S)$ such that: (1) for any structure $p\in A$, there is no adjacent structure $q\in\Omega(S)$ with strictly lower free energy, i.e., $G(q,T)\ge G(p,T)-\varepsilon_{\text{eq}}$ for all $q$ such that $\{p,q\}\in\mathcal{E}$; (2) the set $A$ is a connected component of the graph whose vertices are all structures satisfying condition (1), and edges connect those structures $p$ and $q$ for which $\{p,q\}\in\mathcal{E}$ and $|G(p,T)-G(q,T)|<\varepsilon_{\text{eq}}$. Thus, an attractor point can be either an isolated strict local minimum (the set $A$ consists of one structure) or a plateau (a connected set of structures of approximately equal energy, from which it is impossible to transition to a structure with significantly lower energy). The collection of all attractor points is denoted by $\mathcal{A}(S)$. Each attractor point $A\in\mathcal{A}(S)$ is a connected component of the graph whose vertices are all structures that do not have a neighbor with strictly lower energy (taking into account the threshold $\varepsilon_{\text{eq}}$), and edges connect adjacent structures with approximately the same free energy.

Next, we define a gradient descent process that assigns a certain attractor point to each structure. This process is implemented by a sequence $\{p_{k}\}^{\infty}_{k=0}$, which is constructed recursively. Let $p\in\Omega(S)$ be an arbitrary structure. Set $p_{0}=p$. Further, if $p_{k}$ belongs to some attractor point (i.e., $p_{k}\in A$ for some $A\in\mathcal{A}(S)$), the sequence terminates. If $p_{k}$ does not belong to any attractor point, then there exists at least one adjacent element $q$ with $G(q,T)<G(p_{k},T)-\varepsilon_{\text{eq}}$, and we choose from all such elements the one for which the value $G(q,T)$ is minimal. If there are several such elements, we choose the one with the smallest index in a fixed numbering. Since the set $\Omega(S)$ is finite and at each step the free energy strictly decreases (taking the threshold into account), the sequence cannot be infinite and always reaches some attractor point in a finite number of steps.

Further, we define the mapping $\Phi:\Omega(S)\to\mathcal{A}(S)$ as follows. For any structure $p\in\Omega(S)$, the value $\Phi(p)$ is equal to the attractor point reached as a result of the process described above of sequential transition to the neighbor with the lowest free energy.

Let $A\in\mathcal{A}(S)$ be an attractor point. The basin of attraction corresponding to $A$ is the set
\[
\mathcal{M}(A)=\{p\in\Omega(S):\Phi(p)=A\}.
\]
The family $\{\mathcal{M}(A):A\in\mathcal{A}(S)\}$ forms a partition of $\Omega(S)$. Basins of attraction represent larger macroscopic states of the system, each of which combines a set of structures that ``flow down'' to one attractor point (a strict minimum or a plateau).

For thermodynamic analysis, it is necessary to introduce weights characterizing the probability of realization of each basin of attraction at a given temperature. Let an absolute temperature $T>0$ be given. The Boltzmann weight of a structure $p\in\Omega(S)$ is
\[
w(p)=\exp\left(-\frac{G(p,T)}{RT}\right),
\]
where $R$ is the universal gas constant ($R=N_{\mathrm{A}}k_{\mathrm{B}}$, where $N_{\mathrm{A}}$ is Avogadro's number, $k_{\mathrm{B}}$ is Boltzmann's constant, and the free energy $G(p,T)$ is expressed in molar units (kcal/mol)). The partition function of a basin of attraction $\mathcal{M}\subseteq\Omega(S)$ is the quantity
\[
Z(\mathcal{M})=\sum_{p\in\mathcal{M}}w(p)=\sum_{p\in\mathcal{M}}\exp\left(-\frac{G(p,T)}{RT}\right).
\]
The free energy of the basin of attraction is
\[
F(\mathcal{M})=-RT\ln Z(\mathcal{M}).
\]
\noindent Let us describe the construction of the distance based on the single-linkage method, used in \cite{Morgan1998}, applied to basins of attraction in terms of the definitions introduced above. A path from structure $p$ to structure $q$ is a finite sequence $\gamma=(p_{0},p_{1},\dots,p_{L})$, where $p_{0}=p$, $p_{L}=q$ and $\{p_{k},p_{k+1}\}\in\mathcal{E}$ for all $k$. Let $h(\gamma)=\max_{0\le k\le L}G(p_{k},T)$. For two structures, the distance between them is defined by the rule
\[
\rho(p,q)=\min_{\gamma\in\Gamma(p,q)}h(\gamma),
\]
where $\Gamma(p,q)$ is the set of all paths from $p$ to $q$. This distance is transferred to the basins of attraction $\mathcal{M}_{a},\mathcal{M}_{b}$ in a standard way:
\begin{equation}
\rho(\mathcal{M}_{a},\mathcal{M}_{b})=\min_{p\in\mathcal{M}_{a},\,q\in\mathcal{M}_{b}}\rho(p,q)\label{rho}
\end{equation}
for $a\neq b$ and $\rho(\mathcal{M}_{a},\mathcal{M}_{a})=0$. It is well known and directly verifiable that the function (\ref{rho}) is an ultrametric on the set of basins of attraction. Indeed, symmetry and non-degeneracy are obvious. Let us verify the strong triangle inequality. For three basins $\mathcal{M}_{a},\mathcal{M}_{b},\mathcal{M}_{c}$, choose paths $\gamma_{ac}$ and $\gamma_{cb}$ realizing $\rho(\mathcal{M}_{a},\mathcal{M}_{c})$ and $\rho(\mathcal{M}_{c},\mathcal{M}_{b})$. Since the endpoint of $\gamma_{ac}$ and the starting point of $\gamma_{cb}$ lie in the same basin $\mathcal{M}_{c}$, they can be connected by a path $\gamma_{\text{inner}}$ lying entirely within $\mathcal{M}_{c}$, whose height does not exceed the maximum of the heights of the two already chosen paths. The concatenation $\gamma_{ac}\circ\gamma_{\text{inner}}\circ\gamma_{cb}$ yields a path from $\mathcal{M}_{a}$ to $\mathcal{M}_{b}$, whose height is $\max\{\rho(\mathcal{M}_{a},\mathcal{M}_{c}),\rho(\mathcal{M}_{c},\mathcal{M}_{b})\}$. Since $\rho$ is the minimum height over all paths, we obtain the strong triangle inequality $\rho(\mathcal{M}_{a},\mathcal{M}_{b})\le\max\{\rho(\mathcal{M}_{a},\mathcal{M}_{c}),\rho(\mathcal{M}_{c},\mathcal{M}_{b})\}$. We emphasize that the ultrametricity of (\ref{rho}) does not depend on the specific form of the function $G$ or on the structure of the adjacency graph, but is a direct consequence of the fact that the distance is defined as the minimum over paths of the maximum energy on the path. This construction is equivalent to computing the single-linkage distance \cite{Sneath} on a graph whose vertices are structures and the edge weight $\{p,q\}$ is set to $\max\{G(p,T),G(q,T)\}$. Therefore, we call the distance (\ref{rho}) the ``single-linkage distance on the structure graph''. Note that it is precisely the single-linkage distance that underlies the derivation of the final barrier matrix in \cite{Morgan1998}; therefore, its ultrametricity is not an independent physical result. For a meaningful test of the hypothesis of an ultrametric organization of the energy landscape, a metric is needed that is not reducible to the construction (\ref{rho}) and does not rely on the single-linkage procedure. Such a metric is constructed in Sections \ref{sec_spectral} and \ref{sec_mahalanobis}.

\section{Transition rate matrix and its spectral decomposition}

\label{sec_spectral}

The proposed approach is based on describing the evolution of the system as a Markov process on the structure graph, assuming that the trajectories of this process represent random paths between structures. Let $f_{p}(t)$ be the probability of finding the RNA in structure $p$ at time $t$. Then the dynamics of $f_{p}(t)$ can be modeled by the master equation
\[
\dfrac{df_{p}(t)}{dt}=\sum_{q,\:\{p,q\}\in\mathcal{E}}\left(k_{q\to p}f_{q}(t)-k_{p\to q}f_{p}(t)\right),
\]
where only transitions between adjacent structures are considered, and $k_{p\to q}$ is the rate constant for the transition from structure $p$ to adjacent structure $q$.

According to the theory of chemical reaction rates \cite{Kramers,Onsager}, the rate constant for a transition between states is determined by the height of the energy barrier separating these states. In the secondary structure model, the exact geometry of the energy surface is inaccessible; therefore, a lower bound on the barrier is used through the energies of the initial and final states. For two adjacent structures $p$ and $q$, the rate constant for the transition $p\to q$ has the form
\[
k_{p\to q}=\nu_{0}\exp\left(-\frac{\max\{G(p,T),G(q,T)\}-G(p,T)}{RT}\right),
\]
where $\nu_{0}>0$ is a frequency factor (pre-exponential factor) with dimension {[}time{]}$^{-1}$, and the quantity $\max\{G(p,T),G(q,T)\}$ is used as a lower bound for the free energy of the transition state. This estimate is standard for models of RNA secondary structure kinetics, where the exact geometry of the energy surface is inaccessible \cite{Zuckerman}, and its applicability for elementary operations on secondary structures (addition/deletion of a single pair) is substantiated in \cite{hofacker1994,flamm2000}. It is obvious that the constants $k_{p\to q}$ satisfy detailed balance
\[
k_{p\to q}\cdot w(p)=k_{q\to p}\cdot w(q)
\]
where $w(p)=\exp(-G(p,T)/RT)$ is the Boltzmann weight of structure $p$.

Next, we define a matrix $K$ of size $N\times N$, where $N=|\Omega(S)|$, as follows:
\begin{equation}
K_{pq}=\begin{cases}
\nu_{0}\exp\left(-\dfrac{\max\{G(p),G(q)\}-G(q)}{RT}\right), & \text{for }\{p,q\}\in\mathcal{E},\\[10pt]
0, & \text{for }\{p,q\}\notin\mathcal{E},\;p\neq q,
\end{cases}\label{Kr}
\end{equation}
\[
K_{pp}=-\sum_{q\neq p}K_{qp}.
\]
Note that in the convention adopted here, the element $K_{pq}$ ($p\neq q$) represents the rate of transition from state $q$ to state $p$, as a result of which the sum of the elements of each column of the matrix $K$ is zero ($\sum_{p}K_{pq}=0$), and the generator acts on the column vector of probabilities $\mathbf{f}$ from the left according to the equation $\dot{\mathbf{f}}=K\mathbf{f}$. The matrix $K$ is the generator of a continuous-time Markov process. It satisfies the probability conservation condition and detailed balance $K_{pq}\,w(q)=K_{qp}\,w(p)$, where $w(p)=\exp(-G(p,T)/RT)$ is the stationary distribution. The matrix $K$ is non-symmetric, which complicates the numerical analysis of its spectrum, defined by the standard eigenvalue equation $K\phi=\lambda\phi$. A standard technique is to switch to a symmetric matrix $S$ that has the same eigenvalues as the matrix $K$. We define $S$ via a similarity transformation:
\begin{equation}
S=W^{-1/2}KW^{1/2},\label{eq_similarity}
\end{equation}
where $W=\mathrm{diag}(w(1),\dots,w(N))$ is the diagonal matrix of Boltzmann weights. For $p\neq q$, the elements of matrix $S$ are
\begin{equation}
S_{pq}=\frac{1}{\sqrt{w(p)}}\,K_{pq}\,\sqrt{w(q)}=\nu_{0}\exp\left(-\frac{|G(p)-G(q)|}{2RT}\right)\;\text{for }\{p,q\}\in\mathcal{E},\label{eq_S_offdiag}
\end{equation}
and $S_{pq}=0$ for non-adjacent vertices. The diagonal elements are preserved under this transformation:
\begin{equation}
S_{pp}=K_{pp}=-\sum_{q\neq p}K_{qp}.\label{eq_S_diag}
\end{equation}
The matrix $S$ thus constructed is symmetric and has a spectrum exactly coinciding with the spectrum of $K$. However, $S$ is not a Markov process generator in the strict sense, since the sums of the elements in its columns are not zero. The stationary distribution of the original process (eigenvalue $\lambda_{0}=0$) corresponds to an eigenvector of $S$ of the form $\psi_{0}=W^{1/2}\mathbf{1}$, where $\mathbf{1}$ is a vector of ones.

The connection between the spectra of matrices $K$ and $S$ is established directly through the similarity transformation. Substituting the eigenvector substitution $\phi=W^{1/2}\psi$ into the original equation $K\phi=\lambda\phi$ and multiplying on the left by $W^{-1/2}$ leads to the standard symmetric eigenvalue problem:
\begin{equation}
S\psi=\lambda\psi.\label{eq_standard_eigenvalue}
\end{equation}
Consequently, all eigenvalues $\lambda_{k}$, and hence the relaxation times $\tau_{k}=-1/\lambda_{k}$, are preserved upon transition from the original process to the symmetrized description. The replacement of the matrix $K$ by the matrix $S$ is motivated solely by the computational efficiency of diagonalizing symmetric matrices. The eigenvalues and, therefore, the weights $1/|\lambda_{k}|$ in the metric are preserved. The eigenvectors $\psi_{k}$ of the symmetrized matrix $S$ are orthonormal in the standard Euclidean inner product.

It is easy to see that the matrix $S$ is negative semi-definite. Indeed, consider an arbitrary vector $x\in\mathbb{R}^{N}$ and expand the quadratic form for the symmetric matrix $S$ using the definitions of its elements (\ref{eq_S_offdiag})--(\ref{eq_S_diag}):
\[
x^{T}Sx=\sum_{p}S_{pp}x^{2}_{p}+\sum_{p\neq q}S_{pq}x_{p}x_{q}
\]
\begin{equation}
=-\sum_{p}\left(\sum_{q\neq p}K_{qp}\right)x^{2}_{p}+\sum_{\{p,q\}\in\mathcal{E}}2S_{pq}x_{p}x_{q}.\label{xSx}
\end{equation}
Using the relationship between the elements of $S$ and $K$, we write $S_{pq}=\frac{1}{\sqrt{w(p)}}K_{pq}\sqrt{w(q)}=\frac{1}{\sqrt{w(q)}}K_{qp}\sqrt{w(p)}$, whence $K_{qp}=S_{pq}\sqrt{w(q)/w(p)}$. Substituting this expression into (\ref{xSx}), we obtain after elementary transformations:
\[
x^{T}Sx=-\sum_{\{p,q\}\in\mathcal{E}}S_{pq}\left(\sqrt{\frac{w(q)}{w(p)}}x^{2}_{p}+\sqrt{\frac{w(p)}{w(q)}}x^{2}_{q}-2x_{p}x_{q}\right)
\]
\begin{equation}
=-\sum_{\{p,q\}\in\mathcal{E}}S_{pq}\left(\sqrt[4]{\frac{w(q)}{w(p)}}x_{p}-\sqrt[4]{\frac{w(p)}{w(q)}}x_{q}\right)^{2}\le0.\label{eq_negative_semidef}
\end{equation}
Inequality (\ref{eq_negative_semidef}) holds due to the non-negativity of all off-diagonal elements $S_{pq}$ ($p\neq q$). Equality to zero is achieved if and only if $x_{p}/\sqrt{w(p)}=x_{q}/\sqrt{w(q)}$ for all pairs $\{p,q\}\in\mathcal{E}$, i.e., when the vector $x$ is proportional to $\sqrt{w}$ on each connected component of the graph $G$.

From negative semi-definiteness, it follows that $S$ has a complete set of eigenvalues $\lambda_{0},\lambda_{1},\dots,\lambda_{N-1}\in\mathbb{R}$ satisfying the condition $0=\lambda_{0}\ge\lambda_{1}\ge\lambda_{2}\ge\dots\ge\lambda_{N-1}$. The quantities $\tau_{k}=-1/\lambda_{k}$ ($k\ge1$) have the dimension of time and are relaxation times. Here, the smallest non-zero eigenvalues in absolute value correspond to the slowest processes -- transitions between the deepest basins of attraction, separated by high energy barriers.

\section{Distance between basins of attraction}

\label{sec_mahalanobis}

To construct a metric on the set of basins of attraction, we use the Mahalanobis distance, which was introduced in \cite{Mahalanobis} and is defined as follows. Let $X$ and $Y$ be two vectors in $\mathbb{R}^{m}$, and let $\Sigma$ be a positive definite matrix of size $m\times m$. The Mahalanobis distance between $X$ and $Y$ is the quantity
\begin{equation}
D_{\Sigma}(X,Y)=\sqrt{(X-Y)^{T}\Sigma^{-1}(X-Y)}.\label{Mah}
\end{equation}
If $\Sigma$ is the identity matrix, then the distance (\ref{Mah}) coincides with the Euclidean distance. In general, the matrix $\Sigma$ is interpreted as the covariance matrix of the data distribution, and the Mahalanobis distance accounts for correlations between the components of the vectors.

In our problem, the distance between basins is constructed based on the projections of their characteristic vectors onto the eigenvectors of the symmetrized matrix $S$. It is fundamentally important here that the original physical process is described by the non-symmetric matrix $K$, and directly projecting characteristic vectors onto the eigenvectors $\psi_{k}$ of matrix $S$ would be incorrect. Below, we provide a full mathematical justification for exactly how the projections should be computed to obtain a metric corresponding to the original Markov process.

The original matrix $K$ satisfies the detailed balance condition: $K_{pq}w(q)=K_{qp}w(p)$, where $w(p)=\exp(-G(p,T)/RT)$. We introduce the diagonal matrix of stationary weights $W=\mathrm{diag}(w(1),\dots,w(N))$. The detailed balance condition is equivalent to the symmetry of the matrix $KW$: $(KW)^{T}=KW$. This means that the matrix $K$ is similar to the symmetric matrix $S=W^{-1/2}KW^{1/2}$ and, consequently, is diagonalizable with a real spectrum. For matrix $K$, consider the right eigenvectors $\boldsymbol{\phi}_{k}$ satisfying $K\boldsymbol{\phi}_{k}=\lambda_{k}\boldsymbol{\phi}_{k}$, and the left eigenvectors $\tilde{\boldsymbol{\phi}}_{k}$ satisfying $\tilde{\boldsymbol{\phi}}^{T}_{k}K=\lambda_{k}\tilde{\boldsymbol{\phi}}^{T}_{k}$. From the detailed balance condition, it follows that the left and right vectors are related by a simple algebraic relation:
\begin{equation}
\tilde{\boldsymbol{\phi}}_{k}=W^{-1}\boldsymbol{\phi}_{k}.\label{eq_left_right_relation}
\end{equation}
Indeed, transposing the equality $K\boldsymbol{\phi}_{k}=\lambda_{k}\boldsymbol{\phi}_{k}$, we obtain $\boldsymbol{\phi}^{T}_{k}K^{T}=\lambda_{k}\boldsymbol{\phi}^{T}_{k}$. Multiplying by $W^{-1}$ and using $K^{T}W^{-1}=W^{-1}K$ (which is equivalent to detailed balance), we arrive at $\boldsymbol{\phi}^{T}_{k}W^{-1}K=\lambda_{k}\boldsymbol{\phi}^{T}_{k}W^{-1}$, i.e., $\tilde{\boldsymbol{\phi}}^{T}_{k}K=\lambda_{k}\tilde{\boldsymbol{\phi}}^{T}_{k}$ with $\tilde{\boldsymbol{\phi}}_{k}=W^{-1}\boldsymbol{\phi}_{k}$. The right and left eigenvectors form a biorthogonal system: $\langle\tilde{\boldsymbol{\phi}}_{k},\boldsymbol{\phi}_{\ell}\rangle=\delta_{k\ell}$. An arbitrary initial probability distribution $\mathbf{f}(0)$ is expanded in terms of the right eigenvectors with coefficients computed via the inner product with the left eigenvectors. The physical projection of the macrostate $\mathbf{p}_{a}$ onto the $k$-th relaxation mode is computed via the inner product with the left eigenvector:
\begin{equation}
\pi^{(\mathrm{phys})}_{ak}=\langle\tilde{\boldsymbol{\phi}}_{k},\mathbf{p}_{a}\rangle=\sum_{p}w^{-1}(p)\phi_{k}(p)\frac{w(p)}{W_{a}}=\frac{1}{W_{a}}\sum_{p\in\mathcal{M}_{a}}\phi_{k}(p),\label{eq_phys_projection}
\end{equation}
where $W_{a}=\sum_{p\in\mathcal{M}_{a}}w(p)$ is the partition function of the basin, and $\mathbf{p}_{a}$ is the vector of the conditional equilibrium distribution with components $p_{a}(p)=w(p)/W_{a}$ for $p\in\mathcal{M}_{a}$ and $0$ otherwise. These coefficients are precisely the physical projections and should be used in the formula for the Mahalanobis distance.

Computing the eigenvectors of the non-symmetric matrix $K$ for large sparse systems is numerically unstable and computationally expensive. Therefore, we use the symmetric matrix $S=W^{-1/2}KW^{1/2}$, whose eigenvectors $\boldsymbol{\psi}_{k}$ are related to the right eigenvectors of $K$ by the similarity transformation:
\begin{equation}
\boldsymbol{\psi}_{k}=W^{-1/2}\boldsymbol{\phi}_{k}\;\Longleftrightarrow\;\boldsymbol{\phi}_{k}=W^{1/2}\boldsymbol{\psi}_{k}.\label{eq_psi_phi_relation}
\end{equation}
Substituting this relation into formula (\ref{eq_phys_projection}), we obtain $\pi^{(\mathrm{phys})}_{ak}=\frac{1}{W_{a}}\sum_{p\in\mathcal{M}_{a}}w^{1/2}(p)\psi_{k}(p).$ Therefore, the physical projection is expressed through the standard Euclidean inner product in the symmetrized basis:
\begin{equation}
\pi^{(\mathrm{phys})}_{ak}=\langle\mathbf{u}_{a},\boldsymbol{\psi}_{k}\rangle,\label{eq_final_projection}
\end{equation}
where $\mathbf{u}_{a}$ is a vector with components $u_{a}(p)=w^{1/2}(p)/W_{a}$ for $p\in\mathcal{M}_{a}$ and $0$ otherwise. Thus, using the vectors $\mathbf{u}_{a}$ together with the eigenvectors $\boldsymbol{\psi}_{k}$ of matrix $S$ yields exactly the same numerical values of the projections as using the left/right pair of eigenvectors of the original matrix $K$.

Let us explain the physical meaning of these projections. The eigenvector $\boldsymbol{\psi}_{k}$ describes the $k$-th relaxation mode of the system, which characterizes the collective motion of the probability mass between structures occurring with characteristic time $\tau_{k}=-1/\lambda_{k}$. The projection $\langle\mathbf{u}_{a},\boldsymbol{\psi}_{k}\rangle$ shows how strongly the basin $\mathcal{M}_{a}$ participates in this collective process: a large projection in magnitude means that the structures of this basin make a significant contribution to the probability redistribution described by the mode $\boldsymbol{\psi}_{k}$. If two basins have close projections onto the slow modes (small $|\lambda_{k}|$), this means that they participate in the same slow collective processes, and transitions between them occur relatively quickly. Conversely, a large difference in projections onto the slow modes indicates that the basins are separated by high energy barriers.

Formally, let $N=|\mathcal{X}'|$ be the number of structures in the set $\mathcal{X}'$ obtained after generation and removal of duplicates. For each basin of attraction $\mathcal{M}_{a}$, its conditional equilibrium distribution vector $\mathbf{p}_{a}\in\mathbb{R}^{N}$ is defined with components
\[
p_{a}(p)=\begin{cases}
w(p)/W_{a}, & \text{if }p\in\mathcal{M}_{a},\\
0, & \text{otherwise},
\end{cases}
\]
where $W_{a}=\sum_{q\in\mathcal{M}_{a}}w(q)$ is the partition function of the basin. Then we construct the vector $\mathbf{u}_{a}$ with components $u_{a}(p)=w^{1/2}(p)/W_{a}$ for $p\in\mathcal{M}_{a}$ and project it onto the basis of eigenvectors of matrix $S$. The projection onto the $k$-th eigenvector is $\langle\mathbf{u}_{a},\boldsymbol{\psi}_{k}\rangle=\sum_{p}u_{a}(p)\,\psi_{k}(p)$.

To construct the distance, we fix a natural number $m$, $1\le m<N$, and consider the subspace $\mathcal{V}_{m}\subset\mathbb{R}^{N}$ spanned by the first $m$ eigenvectors of matrix $S$ corresponding to the smallest non-zero eigenvalues in magnitude:
\[
\mathcal{V}_{m}=\mathrm{span}\{\boldsymbol{\psi}_{1},\boldsymbol{\psi}_{2},\dots,\boldsymbol{\psi}_{m}\}.
\]
To each basin $\mathcal{M}_{a}$, we associate a vector of physical projections onto the subspace $\mathcal{V}_{m}$:
\[
\pi_{a}=\bigl(\langle\mathbf{u}_{a},\boldsymbol{\psi}_{1}\rangle,\langle\mathbf{u}_{a},\boldsymbol{\psi}_{2}\rangle,\dots,\langle\mathbf{u}_{a},\boldsymbol{\psi}_{m}\rangle\bigr)\in\mathbb{R}^{m}.
\]
Next, we define a diagonal weight matrix $\Sigma^{-1}$ of size $m\times m$ with elements
\[
(\Sigma^{-1})_{kk}=\frac{1}{|\lambda_{k}|},\quad k=1,\dots,m,
\]
and $(\Sigma^{-1})_{kl}=0$ for $k\neq l$. Then the square of the distance between basins $\mathcal{M}_{a}$ and $\mathcal{M}_{b}$ is defined as the weighted Euclidean norm of the difference of their physical projections:
\[
D^{2}(\mathcal{M}_{a},\mathcal{M}_{b})=(\pi_{a}-\pi_{b})^{T}\Sigma^{-1}(\pi_{a}-\pi_{b})=\sum^{m}_{k=1}\frac{1}{|\lambda_{k}|}\,\left(\langle\mathbf{u}_{a},\boldsymbol{\psi}_{k}\rangle-\langle\mathbf{u}_{b},\boldsymbol{\psi}_{k}\rangle\right)^{2}.
\]
For $a=b$, we set $D(\mathcal{M}_{a},\mathcal{M}_{a})=0$.

The choice of a diagonal matrix with elements $1/|\lambda_{k}|$ as $\Sigma^{-1}$ has a clear physical interpretation. The quantity $|\lambda_{k}|$ characterizes the rate of the $k$-th relaxation process: the smaller $|\lambda_{k}|$, the slower the process and the higher the energy barrier separating the basins in this process. Accordingly, the difference in projections onto a slow mode gives a larger contribution to the distance, reflecting the fact that transitions between basins with different projections onto slow modes are hindered by high barriers. Since the frequency factor $\nu_{0}$ has the dimension {[}time{]}$^{-1}$, the eigenvalues $\lambda_{k}$ also have the dimension {[}time{]}$^{-1}$, and the quantities $1/|\lambda_{k}|$ in the formula for $D^{2}$ have the dimension of time. Consequently, the quantity $D$ defined above has the dimension {[}time{]}$^{1/2}$. For checking ultrametricity, only the relative values of the distances are important, and the absolute scale is irrelevant: ultrametricity is invariant under multiplication of the metric by a positive constant. To bring the distance to the dimension of energy, matching the dimension of free energy and energy barriers, we multiply it by $RT\sqrt{\nu_{0}}$, where $\nu_{0}$ is the previously introduced frequency factor (the pre-exponential factor in the Kramers formula (\ref{Kr}), having dimension {[}time{]}$^{-1}$):
\[
\widetilde{D}(\mathcal{M}_{a},\mathcal{M}_{b})=RT\cdot\sqrt{\nu_{0}}\cdot D(\mathcal{M}_{a},\mathcal{M}_{b}).
\]
In what follows, for brevity, we will denote the normalized distance $\widetilde{D}$ simply as $D$.

On the set of basins of attraction $\mathcal{A}(S)$, we introduce an equivalence relation $\sim_{m}$ defined by the equality of physical projections onto the subspace $\mathcal{V}_{m}$:
\[
\mathcal{M}_{a}\sim_{m}\mathcal{M}_{b}\;\Longleftrightarrow\;\pi_{a}=\pi_{b}\;\Longleftrightarrow\;D(\mathcal{M}_{a},\mathcal{M}_{b})=0.
\]
Denote by $\mathcal{A}(S)/\!\sim_{m}$ the quotient set of equivalence classes under this relation.

\begin{theorem}
\par The function $D$ induces a metric on the quotient set $\mathcal{A}(S)/\!\sim_{m}$.\par
\end{theorem}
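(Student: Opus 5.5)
The plan is to identify $D$ with a weighted Euclidean distance between the projection vectors $\pi_a\in\mathbb{R}^m$. Then every metric axiom except non-degeneracy is inherited from the norm. Non-degeneracy is exactly what the passage to the quotient by $\sim_m$ supplies.

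\textbf{Step 1 (well-defined weights).} The matrix $\Sigma^{-1}=\mathrm{diag}(1/|\lambda_1|,\dots,1/|\lambda_m|)$ must be positive definite. The eigenvectors $\boldsymbol{\psi}_1,\dots,\boldsymbol{\psi}_m$ are chosen to correspond to the smallest \emph{non-zero} eigenvalues in magnitude. By the negative semi-definiteness (\ref{eq_negative_semidef}) of $S$, these eigenvalues satisfy $\lambda_k<0$, so $1/|\lambda_k|\in(0,\infty)$.

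If the structure graph is disconnected, the zero eigenvalue has multiplicity equal to the number of components. The selected modes then skip all of them, and this is the one point I will state explicitly. So $\Sigma^{-1}$ is symmetric positive definite. Setting $\Sigma^{-1/2}=\mathrm{diag}(|\lambda_k|^{-1/2})$, we get
\[
D(\mathcal{M}_a,\mathcal{M}_b)=\bigl\|\Sigma^{-1/2}(\pi_a-\pi_b)\bigr\|_2
\]
for all $a,b$, up to the positive constant $RT\sqrt{\nu_0}$. For $a=b$ the convention $D=0$ agrees with this formula.

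\textbf{Step 2 (pseudometric on $\mathcal{A}(S)$).} Define the map $a\mapsto\xi_a:=\Sigma^{-1/2}\pi_a\in\mathbb{R}^m$. Then $D(\mathcal{M}_a,\mathcal{M}_b)=\|\xi_a-\xi_b\|_2$ is the pullback of the Euclidean metric under this map. Non-negativity, symmetry and the triangle inequality follow immediately from the corresponding properties of the Euclidean norm; the triangle inequality is Minkowski's inequality applied to $\xi_a-\xi_b=(\xi_a-\xi_c)+(\xi_c-\xi_b)$. Multiplying by $RT\sqrt{\nu_0}>0$ preserves all of these properties.

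\textbf{Step 3 (passage to the quotient).} Since $\Sigma^{-1/2}$ is invertible, $D(\mathcal{M}_a,\mathcal{M}_b)=0$ holds iff $\xi_a=\xi_b$ iff $\pi_a=\pi_b$, i.e. iff $\mathcal{M}_a\sim_m\mathcal{M}_b$. It follows that $\sim_m$ is an equivalence relation, being the kernel of a map.

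Next, $D$ is constant on classes. If $a\sim_m a'$ and $b\sim_m b'$, the triangle inequality gives $D(a,b)\le D(a,a')+D(a',b')+D(b',b)=D(a',b')$, and by symmetry the reverse inequality holds too. Hence the formula $D([a],[b]):=D(\mathcal{M}_a,\mathcal{M}_b)$ is well defined on $\mathcal{A}(S)/\!\sim_m$.

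The induced function keeps symmetry and the triangle inequality. It is non-degenerate by the equivalence just shown, so it is a metric. Equivalently, the quotient is isometric to the finite subset $\{\xi_a\}\subset\mathbb{R}^m$ with the Euclidean metric.

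The only potential obstacle is Step 1, namely ensuring that no selected $\lambda_k$ vanishes, so that the weights are finite and the matrix is positive definite. Everything else is routine. I would also remark that $D$ is not an ultrametric in general, since Euclidean configurations of three points need not satisfy the strong triangle inequality.
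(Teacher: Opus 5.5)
Your proof is correct and takes essentially the same route as the paper. Both treat $D$ as a weighted Euclidean distance between the projection vectors $\pi_a$, take the triangle inequality from the Euclidean norm, and get non-degeneracy from the definition of $\sim_m$ (you cite Minkowski via $\xi_a=\Sigma^{-1/2}\pi_a$, while the paper re-derives it with the Cauchy--Bunyakovsky inequality); your explicit checks that the selected $\lambda_k$ are non-zero and that $D$ is constant on classes make precise points the paper leaves implicit or only asserts.
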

\begin{proof}
Let us verify the axioms of a metric on the quotient set. Let $[\mathcal{M}_{a}]$, $[\mathcal{M}_{b}]$, $[\mathcal{M}_{c}]$ be equivalence classes.

Non-negativity. The property $D(\mathcal{M}_{a},\mathcal{M}_{b})\ge0$ is obvious from the definition, as all terms in the sum are non-negative.

Non-degeneracy. From the definition of the distance $D$, it follows that $D(\mathcal{M}_{a},\mathcal{M}_{b})=0$ if and only if $\pi_{a}=\pi_{b}$ for all $k=1,\dots,m$. By the definition of the relation $\sim_{m}$, this means $\mathcal{M}_{a}\sim_{m}\mathcal{M}_{b}$, which is equivalent to the equality of classes $[\mathcal{M}_{a}]=[\mathcal{M}_{b}]$. Consequently, on equivalence classes, the function $D$ is strictly positive for distinct classes.

Symmetry. The property $D(\mathcal{M}_{a},\mathcal{M}_{b})=D(\mathcal{M}_{b},\mathcal{M}_{a})$ follows from the square of the difference in the definition and the symmetry of the relation $\sim_{m}$.

Triangle inequality. Let $\mathcal{M}_{a},\mathcal{M}_{b},\mathcal{M}_{c}$ be representatives of three equivalence classes. For each $k=1,\dots,m$, denote $a_{k}=\langle\mathbf{u}_{a},\boldsymbol{\psi}_{k}\rangle$, $b_{k}=\langle\mathbf{u}_{b},\boldsymbol{\psi}_{k}\rangle$, $c_{k}=\langle\mathbf{u}_{c},\boldsymbol{\psi}_{k}\rangle$, $u_{k}=\sqrt{\omega_{k}}\,|a_{k}-c_{k}|$, $v_{k}=\sqrt{\omega_{k}}\,|c_{k}-b_{k}|$, where $\omega_{k}=1/|\lambda_{k}|$. Then $D(\mathcal{M}_{a},\mathcal{M}_{c})=\|u\|$, $D(\mathcal{M}_{c},\mathcal{M}_{b})=\|v\|$, and $D(\mathcal{M}_{a},\mathcal{M}_{b})^{2}=\sum^{m}_{k=1}\omega_{k}|a_{k}-b_{k}|^{2}$. By the triangle inequality for the absolute value, we have $|a_{k}-b_{k}|\le|a_{k}-c_{k}|+|c_{k}-b_{k}|$. Hence
\[
D(\mathcal{M}_{a},\mathcal{M}_{b})^{2}=\sum^{m}_{k=1}\omega_{k}|a_{k}-b_{k}|^{2}\le\sum^{m}_{k=1}\omega_{k}(|a_{k}-c_{k}|+|c_{k}-b_{k}|)^{2}
\]
\[
=\|u\|^{2}+\|v\|^{2}+2\sum^{m}_{k=1}u_{k}v_{k}\le\|u\|^{2}+\|v\|^{2}+2\|u\|\|v\|=(\|u\|+\|v\|)^{2},
\]
where the last inequality is the Cauchy-Bunyakovsky inequality $\sum_{k}u_{k}v_{k}\le\|u\|\|v\|$. Taking the square root, we obtain $D(\mathcal{M}_{a},\mathcal{M}_{b})\le D(\mathcal{M}_{a},\mathcal{M}_{c})+D(\mathcal{M}_{c},\mathcal{M}_{b})$. Since the value of $D$ does not depend on the choice of representatives within the classes of equivalence, the triangle inequality holds on the quotient set $\mathcal{A}(S)/\!\sim_{m}$.

Thus, all four axioms of a metric are satisfied on the quotient set $\mathcal{A}(S)/\!\sim_{m}$.

Note that for $m=N-1$, the subspace $\mathcal{V}_{m}$ coincides with the orthogonal complement to the stationary distribution, and since the conditional distribution vectors of the basins are normalized and have non-overlapping supports, the relation $\sim_{m}$ is trivial (each class consists of a single element), so that $D$ in this case is a metric on the set $\mathcal{A}(S)$ itself. For $m<N-1$, different basins can fall into the same equivalence class, and $D$ is a metric precisely on the quotient set.
\end{proof}

We emphasize that the metric $D$ is not automatically an ultrametric. Indeed, the definition of $D$ uses a sum of squares of differences of projections with positive weights, which represents a Euclidean metric in the weighted space of projections. The Euclidean metric does not necessarily satisfy the strong triangle inequality $d(x,y)\le\max\{d(x,z),d(z,y)\}$, and it generally does not hold. Therefore, checking the ultrametricity of $D$ is a meaningful task, the result of which is determined by the real structure of the energy landscape.

\section{Filtering of noise modes}

\label{sec_filter}

When numerically diagonalizing large matrices with exponentially small elements, the following problem arises. The elements $S_{pq}$ for structures with large energy differences become vanishingly small (e.g., for a difference $|G(p)-G(q)|=10$ kcal/mol and $RT\approx0.6$ kcal/mol, the exponential $\exp\left(-\frac{|G(p)-G(q)|}{2RT}\right)$ is on the order of $10^{-4}$; for a difference of $20$ kcal/mol, it becomes on the order of $10^{-8}$; for a difference of $30$ kcal/mol, it is on the order of $10^{-14}$). As a result, the matrix $S$ contains elements differing by many orders of magnitude -- from values of order unity to values comparable to machine epsilon (for double precision $\varepsilon_{\text{mach}}\approx2.2\times10^{-16}$, which corresponds to approximately 16 significant decimal digits). When computing the eigenvalues of such a matrix, some of them have an absolute value comparable to machine precision and cannot be considered reliable. These modes represent numerical noise with no physical meaning. If such modes are included in the sum when calculating the Mahalanobis distance, their weights $1/|\lambda_{k}|$ become enormous, and they completely dominate the distance, rendering the result physically meaningless. To solve this problem, we use an algorithm for automatic filtering of noise modes based on the search for a spectral gap.

The algorithm uses not an absolute threshold for the absolute values of eigenvalues, but a relative criterion: the ratios $|\lambda_{(k)}|/|\lambda_{(k-1)}|$ of consecutive eigenvalues ordered by increasing absolute value are analyzed. If this ratio exceeds a given threshold $\theta$, it indicates a sharp jump in the spectrum (a spectral gap), and all modes with smaller indices are discarded as noise. Since the criterion is based on ratios rather than absolute values, it does not depend on the overall scale of the matrix and does not require threshold tuning for a specific system.

Let us describe this criterion formally in the form of an algorithm. Suppose $m_{\text{req}}$ eigenvalues of matrix $S$ with the smallest absolute values are computed. The first step of the algorithm is to sort these eigenvalues in increasing order of absolute value: $|\lambda_{(0)}|\le|\lambda_{(1)}|\le\dots\le|\lambda_{(m_{\text{req}}-1)}|$. The value $\lambda_{(0)}$ corresponds to the stationary distribution and should be equal to zero to within machine precision. Next, for each $k=1,2,\dots,m_{\text{req}}-1$, we compute the ratio
\[
\rho_{k}=\frac{|\lambda_{(k)}|}{|\lambda_{(k-1)}|}.
\]
Here, if $|\lambda_{(k-1)}|$ is less than $10^{-30}$, this value is certainly in the numerical noise region (many orders of magnitude below machine epsilon), and the formal calculation of the ratio $\rho_{k}$ becomes meaningless. In this case, we set $\rho_{k}=\infty$, which guarantees continuation of the spectral gap search and prevents false termination of the algorithm at the boundary of two noise modes.

The spectral gap is defined as the smallest $k$ for which $\rho_{k}>\theta$, where $\theta>1$ is a given threshold. All modes with indices $0,1,\dots,k-1$ are considered noise and are excluded from the summation in the definition of distance $D$. Modes with indices $k,k+1,\dots,m_{\text{req}}-1$ are considered physical and are used to compute the distance. In our numerical implementation, we use the threshold value $\theta=10^{6}$. Let us justify this choice. Among the computed eigenvalues, two groups can be distinguished that differ radically in their nature. The first group includes modes whose absolute value is comparable to machine precision ($\sim10^{-16}$) and which represent numerical noise. The second group includes physical modes corresponding to the real relaxation times of the system. The characteristic scale of physical modes can be estimated as follows. The typical free energy difference between adjacent structures within a basin is several kcal/mol, corresponding to $S$ elements of order $10^{-2}$--$10^{-1}$ and eigenvalues of order $10^{-1}$--$10^{0}$. Even for the slowest relaxation processes associated with transitions across moderate barriers (10--15 kcal/mol), the $S$ elements are of order $10^{-4}$--$10^{-5}$, and the corresponding eigenvalues are of order $10^{-5}$--$10^{-3}$. Thus, the typical absolute values of physical modes lie in the range $10^{-5}$--$10^{0}$, which is many orders of magnitude higher than the level of numerical noise ($\sim10^{-16}$). After sorting in increasing order of absolute value, all noise modes will be at the beginning of the list, and all physical modes at the end. At the boundary between these two parts of the list, there will be two adjacent eigenvalues, one of which is noise and the other physical. The ratio of their absolute values will be anomalously large (at least $10^{5}$--$10^{10}$) compared to the ratio of two noise or two physical modes (where the values differ by no more than a few orders of magnitude). The threshold $\theta=10^{6}$ is chosen as a sufficiently large number allowing reliable identification of this anomaly and separation of the noise part of the spectrum from the physical part.

In the event that a spectral gap is not detected (all ratios $\rho_{k}<\theta$ for $k=1,\dots,m_{\text{req}}-1$), the algorithm retains only the first non-zero mode. Such a case may indicate an insufficient number of requested modes or a specific structure of the spectrum and requires additional verification.

Note that the algorithm we use is based on the standard procedure for determining the effective rank of a matrix, described in the literature on numerical methods of linear algebra \cite{Golub}. A similar approach, based on searching for a sharp change in the ratio of consecutive eigenvalues, is used in principal component analysis to separate significant directions from noise \cite{Jolliffe}.

\section{Verification of ultrametricity}

\label{sec_ultr}

Let $M$ be a finite metric space with metric $d$ and $|M|=K$ elements. The most direct way to verify the ultrametricity of $M$ is to iterate over all triples of elements and check the strong triangle inequality for each triple. However, when checking ultrametricity, one should distinguish the so-called ``trivial ultrametricity'' from ``nontrivial ultrametricity''. Trivial ultrametricity arises when the distances between any three points in a metric space are equal. Such ultrametricity can occur in systems without real hierarchy, for example, when randomly choosing points in a high-dimensional space \cite{zubarev2014,zubarev2017}.

For a triple of distinct elements $(x,y,z)$, we sort the three pairwise distances in ascending order: $d_{\min}\le d_{\text{med}}\le d_{\max}$. We call a triple $(x,y,z)$ trivially ultrametric if the equality $d_{\min}=d_{\text{med}}=d_{\max}$ holds. We call a triple $(x,y,z)$ nontrivially ultrametric if the conditions $d_{\text{med}}=d_{\max}$, $d_{\min}<d_{\text{med}}$ hold. A triple that satisfies neither the condition of trivial nor nontrivial ultrametricity is called non-ultrametric.

It follows directly from the definition of an ultrametric that the space $M$ is ultrametric if and only if all triples of its elements are either trivially or nontrivially ultrametric. However, a situation is possible where the condition of nontrivial or trivial ultrametricity holds not for all triples, but for a substantial part of them. Therefore, we introduce a quantity called the ``degree of ultrametricity'' of the space $M$ \cite{bikulov_zubarev2026}. The degree of nontrivial ultrametricity of the space $M$ is the quantity
\[
u=\frac{|\mathcal{T}_{\text{nt}}|}{|\mathcal{T}|}\times100\%,
\]
where $\mathcal{T}$ is the set of all unordered triples of distinct elements from $M$, and $\mathcal{T}_{\text{nt}}\subset\mathcal{T}$ is the subset of nontrivially ultrametric triples. The concepts of degree of trivial ultrametricity and degree of non-ultrametricity are introduced analogously.

In computational experiments, distances are always known with some error due to rounding or the approximate nature of the calculations. Moreover, in continuous metrics, exact equalities have zero probability. Therefore, it makes sense to introduce the concept of approximate ultrametricity with specified accuracy thresholds. Let two positive numbers $\varepsilon$ and $\delta$ be given such that $0<\varepsilon<\delta$. We call a triple $(x,y,z)$ trivially ultrametric with accuracy $\varepsilon$ if $\frac{d_{\max}-d_{\min}}{d_{\min}}\le\varepsilon$ (in the case $d_{\min}=0$, we assume the condition is satisfied only if $d_{\max}=0$). We also call a triple $(x,y,z)$ nontrivially ultrametric with accuracy $(\varepsilon,\delta)$ if $\frac{d_{\max}-d_{\text{med}}}{d_{\text{med}}}\le\varepsilon,\:\frac{d_{\text{med}}-d_{\min}}{d_{\text{med}}}>\delta$. In other cases, we consider the triple $(x,y,z)$ non-ultrametric with accuracy $(\varepsilon,\delta)$. Here, the degree of nontrivial ultrametricity with accuracy $(\varepsilon,\delta)$ is defined as the following quantity
\begin{equation}
u_{\varepsilon,\delta}=\frac{|\mathcal{T}_{\text{nt},\varepsilon,\delta}|}{|\mathcal{T}|}\times100\%,\label{u_nt}
\end{equation}
where $\mathcal{T}_{\text{nt},\varepsilon,\delta}\subset\mathcal{T}$ is the set of triples that are nontrivially ultrametric with accuracy $(\varepsilon,\delta)$. As $\varepsilon\to0$ and $\delta\to0$, we formally obtain the exact degree of nontrivial ultrametricity $u$; however, in continuous metrics, this limiting transition has no practical meaning, since exact equalities are almost never achieved. Therefore, in computational experiments, approximate verification with specific values of $\varepsilon$ and $\delta$ satisfying the condition $0<\varepsilon<\delta$ should be used. The specific choice of these parameters is determined by the nature of the problem being solved and the required stringency of classification.

\section{Computational scheme}

\label{sec_computational}

This section provides a complete mathematical description of the computational scheme implementing the proposed method. Readers interested only in the results may skip this section without loss of understanding. The scheme consists of the following twelve sequential stages.

Stage 1. In the first stage, precomputation of allowed nucleotide pairs is performed. Namely, for a given primary structure $S$ of length $n$, the set of all position pairs satisfying the complementarity and minimum distance conditions is constructed:
\[
\mathcal{P}_{\text{allowed}}=\{(i,j):1\le i<j\le n,\;j-i\ge h_{\min}+1,\;(S(i),S(j))\in\mathcal{C}\},
\]
where $h_{\min}$ is the minimum hairpin loop length (standard value $h_{\min}=3$). The cardinality of this set does not exceed $\dbinom{n}{2}=O(n^{2})$. Each pair $(i,j)\in\mathcal{P}_{\text{allowed}}$ is bijectively assigned a characteristic function $\mathbf{1}_{ij}:\{1,\dots,n\}\to\{0,1\}$ taking the value $1$ at positions $i$ and $j$ and $0$ at all other positions. Checking that two pairs $(i,j)$ and $(k,l)$ do not conflict (do not cross and do not share common positions) reduces to checking the condition $\mathbf{1}_{ij}(m)\cdot\mathbf{1}_{kl}(m)=0$ for all $m=1,\dots,n$.

Stage 2. At this stage, a finite subset $\mathcal{X}\subset\Omega(S)$ is generated by stochastic sampling from the Gibbs distribution. Using the ViennaRNA library \cite{Vienna}, which implements the Zuker-Stiegler and McCaskill algorithm based on dynamic programming, the structure with minimum free energy $p_{\text{MFE}}$ and its energy $G_{\min}=G(p_{\text{MFE}},T)$ are computed, as well as the full partition function $Z(S)=\sum_{p\in\Omega(S)}\exp(-G(p,T)/RT)$ over the entire space $\Omega(S)$. Then, by the stochastic backtracking method (\texttt{pbacktrack} function of the ViennaRNA library), random structures $p$ are generated, each appearing with probability proportional to its Boltzmann weight: $\mathbb{P}(p)=\exp(-G(p,T)/RT)/Z(S)$. In contrast to deterministic backtracking, which chooses the variant with minimum energy at each step, stochastic backtracking chooses one of the possible variants with probability proportional to the statistical weight of the corresponding fragment. Next, for each generated structure, its free energy $G(p,T)$ is calculated. If $G(p,T)>G_{\min}+\Delta G$, where $\Delta G$ is a given energy window, this structure is discarded. The parameter $\Delta G$ can be equal to $+\infty$, meaning no energy restriction. Duplicate structures, i.e., structures with identical sets of nucleotide pairs, are removed. The process continues until a given number $N_{\max}$ of unique structures is obtained or until the maximum number of generation attempts is exhausted.

Stage 3. At this stage, duplicates are removed. For each structure $p\in\mathcal{X}$, the set of base pairs $P(p)=\{(i,j)\in\Pi:\text{positions }i\text{ and }j\text{ are paired in }p\}$ is constructed. Structures with identical sets $P(p)$ are considered identical, and from each group of identical structures, one is retained. The result is the set $\mathcal{X}'$ of pairwise distinct structures.

Stage 4. At this stage, each structure $p\in\mathcal{X}'$ is transformed into a set of pairs $P(p)$. This representation is used to generate neighboring structures and to construct the adjacency graph in the next stage.

Stage 5. At this stage, the adjacency graph is constructed. For each structure $p\in\mathcal{X}'$, all structures adjacent to $p$ are generated according to the three types of elementary operations: deletion of one pair from $P(p)$, addition of one pair from $\mathcal{P}_{\text{allowed}}\setminus P(p)$ that does not violate the conditions of a secondary structure, and substitution of one pair from $P(p)$ with one pair from $\mathcal{P}_{\text{allowed}}\setminus P(p)$, also not violating the conditions. When generating neighbors, the precomputed set $\mathcal{P}_{\text{allowed}}$ and characteristic functions $\mathbf{1}_{ij}$ (Stage 1) are used, allowing conflicts between pairs to be checked in $O(1)$ time. If a generated structure is not contained in $\mathcal{X}'$, it is added to this set. Thus, the set $\mathcal{X}'$ is augmented to be closed under the adjacency relation. After generating all neighbors, an undirected graph $G=(\mathcal{X}',\mathcal{E})$ is constructed, where $\mathcal{E}=\{\{p,q\}:p\leftrightarrow q,\;p,q\in\mathcal{X}'\}$. After this, for each structure $p$, the set of indices of its neighbors $N(p)=\{q\in\mathcal{X}':\{p,q\}\in\mathcal{E}\}$ is formed.

Stage 6. At this stage, the connectivity of the graph $G$ is checked. To do this, a search for connected components is performed using a depth-first search algorithm, which works as follows. First, an array is formed that stores for each vertex a flag indicating whether it has been visited. Then, for each unvisited vertex $v$, a recursive procedure is launched: the vertex is marked as visited, then for each of its neighbors, if it has not yet been visited, the procedure is called recursively. All vertices visited during one such call form one connected component. After finding all connected components, they are classified into significant and noise components. A connected component is considered significant if it contains at least $\max(3,\,\alpha N)$ structures, where $N=|\mathcal{X}'|$ is the total number of unique structures in the sample, and $\alpha>0$ is a given relative threshold (in our calculations, we choose $\alpha=0.001$). Components that do not satisfy this condition are classified as noise: we consider that they arise due to the incompleteness of stochastic sampling and represent isolated clusters of small size not connected to the main part of the graph. Noise components are excluded from further spectral analysis. If only one connected component is found at this stage, the graph is connected; if several connected components are found, the graph is disconnected.

Stage 7. At this stage, local minima are searched. For each structure $p\in\mathcal{X}'$, the condition is checked: for all neighbors $q\in N(p)$, $G(q,T)>G(p,T)-\varepsilon_{\text{eq}}$ holds, where $\varepsilon_{\text{eq}}$ is a small threshold accounting for possible numerical errors in free energy calculation (in our calculations, $\varepsilon_{\text{eq}}=10^{-9}$ kcal/mol). Structures satisfying this condition form the set of candidates $\mathcal{S}_{\text{cand}}$.

Stage 8. At this stage, basins of attraction are determined. First, the set of candidates $\mathcal{S}_{\text{cand}}$ is constructed, which consists of all structures that do not have a neighbor with strictly lower free energy (taking into account the threshold $\varepsilon_{\text{eq}}$). On this set, a graph $G_{\text{cand}}=(\mathcal{S}_{\text{cand}},\mathcal{E}_{\text{cand}})$ is defined, in which two structures $p,q\in\mathcal{S}_{\text{cand}}$ are connected by an edge if they are neighbors in the original graph $G$ and $|G(p,T)-G(q,T)|<\varepsilon_{\text{eq}}$. The connected components of this graph are called attractor points $A_{1},\dots,A_{R}$. If a component consists of one structure, the attractor point is a strict local minimum. If a component contains more than one structure, then the attractor point represents a plateau, which is a connected set of structures with equal (to within the threshold $\varepsilon_{\text{eq}}$) free energy, none of which has a neighbor with lower energy. The depth-first search algorithm described in Stage 6 is used to identify connected components. Then, for each structure $p\in\mathcal{X}'$, gradient descent is performed. A sequence $\{p_{k}\}^{\infty}_{k=0}$ is constructed: $p_{0}=p$; if $p_{k}$ belongs to some attractor point $A_{i}$, the descent terminates; otherwise, among all neighbors $q\in N(p_{k})$ for which $G(q,T)<G(p_{k},T)-\varepsilon_{\text{eq}}$, the one with the minimal value of $G(q,T)$ is chosen; in case of equal energies, the neighbor with the smallest index in a fixed numbering is chosen. Due to the finiteness of $\mathcal{X}'$ and the strict decrease of energy at each step, the sequence converges in a finite number of steps. The structure $p$ is assigned to the attractor point to which the sequence led. The basin of attraction $\mathcal{M}_{i}$ corresponding to point $A_{i}$ is the set of all structures whose descent ends in $A_{i}$. The family $\{\mathcal{M}_{i}\}^{R}_{i=1}$ forms a partition of $\mathcal{X}'$. This approach guarantees that plateaus are treated as unified attractor points and are not fragmented into a multitude of degenerate basins.

Stage 9. At this stage, for each basin $\mathcal{M}_{i}$, its partition function is computed:
\[
W_{i}=\sum_{p\in\mathcal{M}_{i}}\exp\left(-\frac{G(p,T)-G_{\min}}{RT}\right),
\]
where $G_{\min}=\min_{p\in\mathcal{X}'}G(p,T)$. Subtracting $G_{\min}$ in the exponent prevents numerical overflow when working with real numbers without changing the ratios of the partition functions.

Stage 10. At this stage, basins are filtered. Basins whose size $|\mathcal{M}_{i}|$ is less than a given threshold $k_{\min}$ are excluded from further analysis as statistically insignificant. If after this the number of remaining basins exceeds a given maximum value $K_{\max}$, $K_{\max}$ basins with the largest values of $W_{i}$ are selected from them. Such selection is based on the thermodynamic principle: the most probable basins (with the largest partition function) make the dominant contribution to the equilibrium properties of the system. As a result, a set of basins $\{\mathcal{M}_{1},\dots,\mathcal{M}_{K}\}$ is obtained, where $K\le K_{\max}$. The choice of $k_{\min}$ is determined by the following considerations. For small $k_{\min}$, isolated structures that do not form physically significant basins are included in the analysis. These structures often end up in separate connected components, which leads to high sample fragmentation ($f_{\text{inter}}\to1$) and makes the ultrametricity estimation unreliable. For large $k_{\min}$, many real basins of small size are discarded, and the analysis is limited to only a few largest basins, impoverishing the picture of the landscape. The optimal value depends on the sample size $N$ and the degree of graph fragmentation. Based on numerical experiments, it is recommended to choose $k_{\min}$ in the range from $5$ to $20$; the specific value should be selected so that the fraction of inter-component triples $f_{\text{inter}}$ does not exceed $0.3$--$0.5$, which ensures sufficient connectivity of the graph of significant basins. The choice of $K_{\max}$ is dictated by the computational costs for spectral decomposition and ultrametricity testing. The number of basin triples grows as $\binom{K}{3}\sim K^{3}/6$, and for $K>500$, iterating over all triples becomes resource-intensive. On the other hand, too small a $K_{\max}$ provides insufficient statistics for a reliable estimation of the degree of ultrametricity. Our recommended range is: $K_{\max}=100$--$500$. In this work, the values $k_{\min}=5$ and $K_{\max}=500$ were used.

Stage 11. At this stage, processing by connected components of the graph $G$, found in Stage 6, is performed. The disconnectedness of the graph can arise for two reasons. First, during stochastic sampling, the energy window $\Delta G$ may not be wide enough to include structures through which paths between different regions of the landscape pass. Second, the topology of the structure graph itself at a given $\Delta G$ may be objectively disconnected: there exist pairs of basins between which any path in the structure space must pass through configurations with energy exceeding $G_{\min}+\Delta G$. It is impossible to distinguish between these two situations within a single sample; however, the analysis by connected components we used guarantees that distances are computed only between those basins that are known to belong to the same connected component at a given $\Delta G$. In this case, the matrix $S$ built for the entire set $\mathcal{X}'$ is block-diagonal and has zero eigenvalues (one for each connected component). Since kinetic transitions are possible only between structures within one connected component, the distance between basins belonging to different components is physically undefined. Therefore, each connected component is processed separately: its own matrix $S$ having exactly one zero eigenvalue is built for it, and the spectral distance between basins within this component is calculated. Suppose $C$ connected components are found with sizes $N_{1},N_{2},\dots,N_{C}$ sorted in descending order. If the graph is connected ($C=1$), the entire set $\mathcal{X}'$ is considered as one component, and processing proceeds to Stage 12. If the graph is disconnected ($C>1$), for each connected component $c$, the set of basins having at least one structure in this component is determined. If the number of such basins $K_{c}$ is less than three, the component is skipped, since at least three elements are required for ultrametricity testing. Further, for each processed significant component with index $c$, the degree of nontrivial ultrametricity $u_{c}$ is calculated (over all triples of basins belonging to this component). The final degree of ultrametricity for the entire sequence is determined as the weighted average over the number of basin triples in significant components:
\[
u_{\text{weighted}}=\frac{\sum_{c\in\mathcal{C}_{\text{sig}}}u_{c}\cdot\binom{K_{c}}{3}}{\sum_{c\in\mathcal{C}_{\text{sig}}}\binom{K_{c}}{3}},
\]
where $\mathcal{C}_{\text{sig}}$ is the set of significant components containing at least three basins after filtering, and $K_{c}$ is the number of such basins in component $c$. To estimate the degree of sample fragmentation, the index $f_{\text{inter}}$ is introduced, which is the fraction of basin triples belonging to different significant connected components. If $K_{\text{sig}}=\sum_{c\in\mathcal{C}_{\text{sig}}}K_{c}$ is the total number of basins in all significant components after filtering, then
\[
f_{\text{inter}}=1-\frac{\sum_{c\in\mathcal{C}_{\text{sig}}}\binom{K_{c}}{3}}{\binom{K_{\text{sig}}}{3}}.
\]
The quantity $f_{\text{inter}}$ takes values from $0$ (all significant basins belong to one component) to values close to $1$ (significant basins are distributed over many isolated components). For large values of $f_{\text{inter}}$, the quantity $u_{\text{weighted}}$ is computed only over a small fraction of triples belonging to one component, which must be taken into account when interpreting the results: a high value of $u_{\mathrm{nt}}$ with a large $f_{\text{inter}}$ indicates hierarchy within individual connected components but does not allow conclusions about the global organization of the energy landscape.

Stage 12. At this stage, for each connected component selected in Stage 11, the transition rate matrix is constructed, spectral decomposition is performed, and Mahalanobis distances are computed. First, the structures belonging to this component are extracted. Suppose the component contains $N_{c}$ structures. A bijection $\varphi:\{1,\dots,N_{c}\}\to\mathcal{X}'_{c}$ is constructed, where $\mathcal{X}'_{c}\subset\mathcal{X}'$ is the set of structures of the component. For each basin $\mathcal{M}_{i}$, the structures that fell into this component are determined: $\mathcal{M}_{i,c}=\mathcal{M}_{i}\cap\mathcal{X}'_{c}$. If $\mathcal{M}_{i,c}\neq\varnothing$, this set forms a basin in this component. Basins for which $\mathcal{M}_{i,c}=\varnothing$ are excluded. If after this fewer than three basins remain in the component, the component is skipped. Then, the adjacency graph for this component is constructed. For each structure $p\in\mathcal{X}'_{c}$, its neighbors in graph $G$ are filtered: only those also belonging to $\mathcal{X}'_{c}$ are retained. Structure indices are transformed into local ones using the bijection $\varphi$. After this, a symmetric matrix $S_{c}$ of size $N_{c}\times N_{c}$ is constructed with elements
\[
(S_{c})_{pq}=\nu_{0}\exp\left(-\frac{|G(p)-G(q)|}{2RT}\right)
\]
for each pair of adjacent structures $p,q$ in the component ($p\neq q$) and zero for all other pairs $p\neq q$. The diagonal elements are defined as $(S_{c})_{pp}=K_{pp}=-\sum_{q\neq p}K_{qp}$, where $K_{pq}$ are computed according to (\ref{Kr}). The matrix $S_{c}$ is symmetric, negative semi-definite, and sparse: the number of non-zero elements is $N_{c}+2E_{c}$, where $E_{c}$ is the number of edges in the component. Next, the spectral decomposition of the symmetric matrix $S_{c}$ is performed. The $m_{\text{req}}$ eigenvalues with the smallest absolute values and the corresponding eigenvectors are computed. The matrix $S_{c}$ is sparse: its number of non-zero off-diagonal elements equals the number of edges in the connected component of the structure graph. Computing the spectrum of such a matrix is a standard problem in numerical linear algebra \cite{Golub}. After computing the eigenvalues, automatic filtering of noise modes is performed according to the algorithm described in Section \ref{sec_filter}. If after filtering $m_{\text{phys}}\le m_{\text{req}}$ physical modes remain, then if $m_{\text{phys}}=0$, the component is skipped. Then, for each basin $\mathcal{M}_{a}$ in this component, a vector $\mathbf{u}_{a}\in\mathbb{R}^{N_{c}}$ is constructed with components $u_{a}(p)=w^{1/2}(p)/W_{a}$ for $p\in\mathcal{M}_{a}$ and $u_{a}(p)=0$ otherwise, where $w(p)=\exp(-G(p,T)/RT)$, and $W_{a}$ is the basin's partition function (accounting for the shift by $G_{\min}$, which only leads to an overall scaling of the vector and does not affect the degree of ultrametricity). Physical projections $\pi_{ak}=\langle\mathbf{u}_{a},\boldsymbol{\psi}_{k}\rangle$ are computed for all $a=1,\dots,K_{c}$ and $k=1,\dots,m_{\text{phys}}$. After this, the matrix of Mahalanobis distances is computed:
\[
D_{ab}=\begin{cases}
0, & a=b,\\[8pt]
RT\sqrt{\nu_{0}}\cdot\sqrt{\sum^{m_{\text{phys}}}_{k=1}\dfrac{1}{|\lambda_{k}|}(\pi_{ak}-\pi_{bk})^{2}}, & a\neq b.
\end{cases}
\]
Next, for the matrix $D$, the degree of nontrivial ultrametricity $u_{\mathrm{nt}}\equiv u_{\varepsilon,\delta}$ is computed according to the definition (\ref{u_nt}) of Section \ref{sec_ultr}.

A central question arising in the interpretation of the obtained values $u_{\mathrm{nt}}$ is the question of their statistical significance, i.e., to what extent the observed partial nontrivial ultrametricity is specific to the studied biological sequences and is not a typical property of some class of random objects. The very formulation of this question, however, is ambiguous and requires a preliminary strict definition of what is meant by ``random ultrametricity'' in the context of this problem. The difficulty lies in the fact that a possible conclusion about the absence of statistically significant differences between the degree of ultrametricity of real RNAs and the degree of ultrametricity of some random model can have two fundamentally different meanings. The first is that the observed hierarchical structure may be an artifact of the method and does not reflect any real physical property of the landscape. The second is that partial ultrametricity is a universal property of the energy landscapes of heteropolymers, arising almost inevitably for practically any, including random, sequence, and in this sense, its presence in real RNAs is not statistically anomalous, but itself remains a deep physical fact. The standard frequentist approach to testing null hypotheses is poorly suited to distinguishing between these two situations: if both the real system and the null model demonstrate a non-zero degree of ultrametricity, the result of the test ``no significant differences'' carries no useful information. Therefore, a correct analysis requires the construction of a system of null hypotheses that differ in exactly which property of the system is declared random and, accordingly, which aspect of the observed ultrametricity is subjected to testing. In this work, statistical significance testing of the observed values $u_{\mathrm{nt}}$ is implemented using two null models, which are described below. For each model, a Monte Carlo ensemble of $S$ random realizations is generated (in the software implementation, the parameter \texttt{NUM\_NULL\_SAMPLES}), for each of which the degree of nontrivial ultrametricity $u^{(s)}_{\mathrm{nt}}$ is computed, $s=1,\dots,S$. From the obtained empirical distribution of the null model, the achieved significance level (p-value) is estimated. Following the logic of permutation tests, we use a two-sided criterion based on the absolute deviation from the mean of the null ensemble:
\begin{equation}
p=\frac{1}{S}\left|\left\{ s\in\{1,\dots,S\}:\left|u^{(s)}_{\mathrm{nt}}-\langle u^{\mathrm{null}}_{\mathrm{nt}}\rangle\right|\ge\left|u^{\mathrm{real}}_{\mathrm{nt}}-\langle u^{\mathrm{null}}_{\mathrm{nt}}\rangle\right|\right\} \right|,\label{eq_pvalue}
\end{equation}
where $\langle u^{\mathrm{null}}_{\mathrm{nt}}\rangle=S^{-1}\sum^{S}_{s=1}u^{(s)}_{\mathrm{nt}}$ is the sample mean of the degree of ultrametricity in the null ensemble. The choice of a two-sided criterion is dictated by the fact that biological function may require both the presence of a pronounced hierarchy (high ultrametricity) and its absence or specific frustration of the landscape (low ultrametricity). A small value of $p$ indicates that the observed degree of ultrametricity is a statistically significant property of the given sequence, not explainable within the chosen null model, regardless of the direction of deviation. It is important to emphasize that formula (\ref{eq_pvalue}) defines a symmetric two-sided criterion: the deviation of the real value from the mean of the null distribution is considered significant if it is large in absolute value, irrespective of the sign of this deviation.

Below is a description of each of the two null models.

Null model 1 is the energy shuffle model (\texttt{energy\_shuffle}). This model tests the hypothesis $H^{(1)}_{0}$ that the observed hierarchy is conditioned exclusively by the topology of the conformational transition graph, and not by the specific distribution of free energies. When generating data within this model, the set of graph vertices $\mathcal{X}'$ and the set of edges $\mathcal{E}$ (including all topological correlations -- clustering, modularity, assortativity) are preserved unchanged; however, the free energy values $\{G(p)\}_{p\in\mathcal{X}'}$ are randomly permuted. Formally, a random permutation $\sigma$ of the structure indices is generated, and the new energy of structure $p$ is assigned the value $G^{(s)}(p)=G(\sigma(p))$. After the energy reassignment, the computational chain, uniform for all significant connected components, is fully reproduced: basins of attraction are newly determined by gradient descent on the modified landscape, the symmetrized transition rate matrix $S^{(s)}$ is constructed, its spectral decomposition is performed, Mahalanobis distances are computed, and $u^{(s)}_{\mathrm{nt}}$ is determined. This procedure preserves the density of states and the topology of the conformational space but destroys the correlation between the local structure of secondary folds and their thermodynamic stability. If $u^{\mathrm{real}}_{\mathrm{nt}}$ deviates significantly from the distribution of $H^{(1)}_{0}$, this indicates that the hierarchical organization of the landscape is determined precisely by the specific arrangement of low-energy states (thermodynamic funnels), and not merely by the geometry of the graph.

Null model 2 is the nucleotide shuffle model (\texttt{nt\_shuffle}). This model is in a sense a biological control and tests the hypothesis $H^{(2)}_{0}$ that the observed ultrametricity is determined by the specific order of nucleotides fixed by evolution, and not by a consequence of general statistical properties conditioned by nucleotide composition. In this model, random permutations of the original primary sequence $S$ are generated with exact preservation of the mononucleotide composition $(n_{A},n_{U},n_{G},n_{C})$. For each such random sequence $S^{(s)}$, all stages of the computational scheme are fully executed anew: stochastic generation of a sample of secondary structures from the Gibbs distribution, construction of the adjacency graph, identification of basins of attraction for all significant connected components, filtering, construction of the transition rate matrix, spectral decomposition, calculation of the kinetic metric, and ultrametricity verification. This model destroys not only the correlation between structure and energy but also the very specific topology of the state graph characteristic of the given biological sequence. If $u^{\mathrm{real}}_{\mathrm{nt}}$ deviates significantly from the distribution of $H^{(2)}_{0}$, this serves as a direct indication that the specific order of nucleotides leads to the formation of a more pronounced (or, conversely, suppressed) hierarchical structure of the landscape than that typical for a random order at the same chemical composition. The absence of significant differences would mean that the degree of hierarchical organization is determined mainly by nucleotide composition, and not by the specific order of nucleotides.

These models process all significant connected components of the structure graph uniformly with the main computation stage (Section \ref{sec_computational}), with the same threshold $\alpha$ and subsequent weighted averaging of $u_{\mathrm{nt}}$ over components. This guarantees that the observed differences between the real system and the null model are not a consequence of different processing of disconnected graphs. For the \texttt{energy\_shuffle} model, the same sample of structures generated for the real sequence is used, which excludes variability associated with the stochastic nature of generation. The \texttt{nt\_shuffle} model, on the contrary, generates a sample anew for each realization, which increases the variance of the null distribution but makes the test the most stringent.

\section{Computational results}

\label{sec_results}

In this section, we present the results of a numerical study of the ultrametric properties of the energy landscapes of RNA secondary structures using the developed computational scheme. It should be emphasized that in this work, we did not aim to establish a connection between the functional properties of RNA and the degree of nontrivial ultrametricity of their energy landscape. The results we present are illustrative in nature and demonstrate the performance of the proposed method on real sequences with experimentally determined thermodynamic parameters.

All calculations were performed with the following parameter values. The temperature $T=37^{\circ}\mathrm{C}$ (310.15 K) corresponds to physiological conditions. The minimum hairpin loop length $h_{\min}=3$ is the standard value used in the nearest neighbor model. The energy window $\Delta G=50$ kcal/mol was chosen as sufficiently wide to include structures separated by moderate barriers, while at the same time limiting the sample size to reasonable limits. The maximum sample size $N_{\max}=100000$ is limited by our computational resources available for adjacency graph construction and spectral decomposition. The threshold $\varepsilon_{\text{eq}}=10^{-9}$ kcal/mol in determining attractor points (Stages 7, 8) was chosen negligibly small compared to typical free energy differences ($\sim0.1$--$1$ kcal/mol) to distinguish between degenerate and non-degenerate minima, while allowing for possible numerical errors in calculating $G(p,T)$. The minimum basin size $k_{\min}=5$ filters out basins containing only one or a few structures that do not form statistically representative macrostates. The maximum number of basins $K_{\max}=500$ limits the computational costs for spectral decomposition and ultrametricity verification (the number of basin triples grows as $K^{3}/6$), while retaining sufficient statistics for a reliable estimation of $u_{\mathrm{nt}}$. The relative threshold for classifying connected components $\alpha=0.001$ (Stage 6) with typical sample sizes $N\sim10^{4}$--$10^{5}$ means that $\alpha N$ ranges from $10$ to $100$ structures, and components of smaller size are excluded from the analysis. This choice is dictated by the following considerations. On the one hand, components whose size is negligibly small relative to the total sample volume do not allow a sufficient number of basins of attraction to be formed for a reliable estimation of ultrametricity within the component. On the other hand, overestimating the threshold leads to an unjustified loss of data. The value $\alpha=0.001$ was chosen as a compromise. An investigation of the sensitivity of the results to variation of $\alpha$ is beyond the scope of the present work. The ultrametricity check parameters were chosen as $\varepsilon=0.05$, $\delta=0.1$ (under the condition $\varepsilon<\delta$, see Section \ref{sec_ultr}). This choice is heuristic and reflects the degree of accuracy adopted in the work with which two distances are considered equal or different. No investigation of the sensitivity of the results to variation of $\varepsilon$ and $\delta$ was conducted. The spectral gap threshold $\theta=10^{6}$, separating noise and physical modes, is justified in Section \ref{sec_filter}. The number of requested eigenmodes $m_{\text{req}}=50$ was chosen as a compromise between the completeness of the landscape description and computational costs. The frequency factor $\nu_{0}$ in the Kramers formula (\ref{Kr}) enters as a common multiplier into all eigenvalues $\lambda_{k}$ of matrix $S$, and, consequently, into the weights $1/|\lambda_{k}|$ in the distance definition. When checking ultrametricity, only the relative values of distances are important, and the overall scale does not affect the satisfaction or violation of inequalities; therefore, the specific value of $\nu_{0}$ is immaterial. The number of null model realizations $S=100$ (Section \ref{sec_computational}) was chosen as a compromise between the accuracy of p-value estimation and the computational costs of a full landscape regeneration for each realization.

To obtain an overall picture, two large-scale computational experiments were conducted on samples of 100 sequences each. In the first experiment, 100 reference small nuclear RNAs from the NCBI database \cite{refseq} with lengths from 60 to 90 nucleotides, randomly selected, were studied. In the second experiment, 100 random sequences of the same length were generated with equiprobable ($p=0.25$) selection of each of the four nucleotides. For each sequence, the full computational scheme was executed once: stochastic generation of structures, construction of the adjacency graph, identification of basins of attraction, construction of the Kramers transition rate matrix for each connected component, spectral decomposition, and calculation of the kinetic metric with subsequent ultrametricity verification.

The results for the sample of 100 reference RNAs are as follows. The degree of nontrivial ultrametricity $u_{\mathrm{nt}}$ varies in the range from $43.86\%$ to $89.06\%$ with a mean value of $66.68\%$ and a sample standard deviation of $9.13\%$. The fraction of trivially ultrametric triples is negligibly small for all sequences (mean $0.06\%$), while the fraction of non-ultrametric triples averages $33.26\%$. The number of structures in the sample varied from $5\,786$ to $100\,000$, and the number of basins after filtering ranged from $48$ to $1\,678$. The fragmentation index $f_{\text{inter}}$ demonstrated considerable spread from $0.0000$ to $0.9888$ with a mean of $0.4566$, which indicates varying degrees of connectivity of the structure graph for different sequences.

The results for random RNAs are qualitatively similar. The value of $u_{\mathrm{nt}}$ lies in the range from $45.79\%$ to $91.68\%$, with a mean value of $68.27\%$ and a standard deviation of $8.49\%$, which is slightly higher than the mean for the reference RNAs. The fraction of non-ultrametric triples averages $31.67\%$, and trivially ultrametric triples average $0.06\%$. The fragmentation index for random RNAs is on average higher ($0.6692$), indicating a greater fragmentation of the structure graph compared to reference sequences. A comparison of the two samples shows that both natural and random RNAs demonstrate similar levels of moderate and strong nontrivial ultrametricity, with the spread of $u_{\mathrm{nt}}$ values being somewhat narrower for random sequences, and the mean value practically coinciding with the mean for reference RNAs.

The distribution of $u_{\mathrm{nt}}$ values for both samples has the following character. Weak ultrametricity ($u_{\mathrm{nt}}<50\%$) is demonstrated by only 3 out of 100 reference RNAs and 4 out of 100 random ones. Moderate ultrametricity ($50\%\le u_{\mathrm{nt}}<75\%$) is demonstrated by 72 reference and 70 random RNAs. Strong ultrametricity ($u_{\mathrm{nt}}\ge75\%$) is demonstrated by 25 reference and 26 random RNAs. Thus, the overwhelming majority of both natural and random sequences (97\% and 96\%, respectively) exhibit a moderate or strong degree of nontrivial ultrametricity of the energy landscape.

For detailed analysis, six reference RNAs were selected that showed a high degree of nontrivial ultrametricity in the primary screening (more than $70\%$) with relatively low graph fragmentation ($f_{\text{inter}}<0.4$). Their identifiers and characteristics are given below:

(1) 1207899840 XR\_002461159.1 Leishmania major strain Friedlin U5 snRNA, length 72 nt;

(2) 1063862654 XR\_001930133.1 Eremothecium sinecaudum HHLSNR56, length 76 nt;

(3) 133891602 NR\_003443.1 Caenorhabditis elegans smy-1, length 77 nt;

(4) 133891621 NR\_003462.1 Caenorhabditis elegans smy-11, length 78 nt;

(5) 1395447081 NR\_157063.1 Caenorhabditis elegans smy-7, length 82 nt;

(6) 392924753 NR\_069932.1 Caenorhabditis elegans smy-6, length 83 nt.

For each of the six selected RNAs, five independent runs of the full computational procedure with different random samples of structures were performed to assess the statistical stability of the results (Table \ref{tab_main}). The parameters for structure generation and analysis were identical to those used in the large-scale experiments.

\begin{table}[H]
\centering \caption{Results of five independent runs for six reference RNAs with high nontrivial ultrametricity. All quantities except length are given in the format mean $\pm$ standard deviation.}
\label{tab_main}{\small{}{}{}}{\small{}%
\begin{tabular}{|c|c|c|c|c|c|c|c|}
\hline
{\small \#  } & {\small Length  } & {\small Structures  } & {\small Basins  } & {\small$f_{\text{inter}}$  } & {\small$u_{\mathrm{nt}}$ (\%)  } & {\small$u_{\mathrm{tr}}$ (\%)  } & {\small$u_{\mathrm{non}}$ (\%) }\tabularnewline
\hline
{\small (1)  } & {\small 72  } & {\small 32530$\pm$46  } & {\small 294$\pm$4  } & {\small 0.3704$\pm$0.0519  } & {\small 81.22$\pm$0.55  } & {\small 0.01$\pm$0.00  } & {\small 18.77$\pm$0.55 }\tabularnewline
{\small (2)  } & {\small 76  } & {\small 33439$\pm$152  } & {\small 426$\pm$5  } & {\small 0.1519$\pm$0.0294  } & {\small 81.42$\pm$1.12  } & {\small 0.01$\pm$0.01  } & {\small 18.57$\pm$1.13 }\tabularnewline
{\small (3)  } & {\small 77  } & {\small 14521$\pm$80  } & {\small 212$\pm$5  } & {\small 0.0082$\pm$0.0184  } & {\small 76.43$\pm$1.47  } & {\small 0.01$\pm$0.00  } & {\small 23.56$\pm$1.47 }\tabularnewline
{\small (4)  } & {\small 78  } & {\small 20026$\pm$90  } & {\small 203$\pm$3  } & {\small 0.0087$\pm$0.0193  } & {\small 79.98$\pm$5.44  } & {\small 0.01$\pm$0.01  } & {\small 20.01$\pm$5.43 }\tabularnewline
{\small (5)  } & {\small 82  } & {\small 12248$\pm$55  } & {\small 119$\pm$5  } & {\small 0.0647$\pm$0.1012  } & {\small 73.77$\pm$1.94  } & {\small 0.02$\pm$0.01  } & {\small 26.21$\pm$1.95 }\tabularnewline
{\small (6)  } & {\small 83  } & {\small 21793$\pm$98  } & {\small 268$\pm$7  } & {\small 0.0335$\pm$0.0370  } & {\small 75.60$\pm$2.24  } & {\small 0.01$\pm$0.01  } & {\small 24.39$\pm$2.25 }\tabularnewline
\hline
\end{tabular}}
\end{table}

The results of Table \ref{tab_main} demonstrate that all six selected RNAs stably reproduce a high degree of nontrivial ultrametricity in the range $73.77\%$--$81.42\%$, which significantly exceeds the mean value over the general sample of reference RNAs ($66.68\%$). The standard deviations of $u_{\mathrm{nt}}$ over five runs are small (from $0.55\%$ to $5.44\%$), indicating the statistical stability of the results. The fraction of trivially ultrametric triples is negligibly small for all sequences (less than $0.1\%$). The fragmentation index $f_{\text{inter}}$ for all six RNAs is relatively low (from $0.0082$ to $0.3704$), indicating high connectivity of the structure graph and allowing the obtained values of $u_{\mathrm{nt}}$ to be interpreted as a characteristic of the global, rather than local, hierarchical organization of the energy landscape.

To check the statistical significance of the observed ultrametricity, two independent tests based on different null models were conducted. The results of the nucleotide shuffle test (\texttt{nt\_shuffle}) are given in Table~\ref{tab_nt}.

\begin{table}[H]
\centering \caption{Results of the nt\_shuffle test (nucleotide permutation preserving nucleotide composition) for six reference RNAs. The null distribution was obtained from 100 realizations.}
\label{tab_nt}{\small{}{}{}}{\small{}%
\begin{tabular}{|c|c|c|c|c|c|c|}
\hline
{\small \#  } & {\small Length  } & {\small$u^{\mathrm{real}}_{\mathrm{nt}}$ (\%)  } & {\small$u^{\mathrm{null}}_{\mathrm{nt}}$ (\%)  } & {\small$u^{\mathrm{null}}_{\mathrm{tr}}$ (\%)  } & {\small$u^{\mathrm{null}}_{\mathrm{non}}$ (\%)  } & {\small$p$ }\tabularnewline
\hline
{\small (1)  } & {\small 72  } & {\small 81.22  } & {\small 67.37$\pm$8.26  } & {\small 0.07$\pm$0.11  } & {\small 32.56$\pm$8.23  } & {\small 0.0500 }\tabularnewline
{\small (2)  } & {\small 76  } & {\small 81.42  } & {\small 67.68$\pm$8.90  } & {\small 0.05$\pm$0.09  } & {\small 32.27$\pm$8.87  } & {\small 0.1200 }\tabularnewline
{\small (3)  } & {\small 77  } & {\small 76.43  } & {\small 69.56$\pm$8.54  } & {\small 0.04$\pm$0.06  } & {\small 30.40$\pm$8.53  } & {\small 0.3300 }\tabularnewline
{\small (4)  } & {\small 78  } & {\small 79.98  } & {\small 70.13$\pm$8.54  } & {\small 0.06$\pm$0.19  } & {\small 29.81$\pm$8.51  } & {\small 0.4100 }\tabularnewline
{\small (5)  } & {\small 82  } & {\small 73.77  } & {\small 67.69$\pm$9.05  } & {\small 0.05$\pm$0.12  } & {\small 32.27$\pm$9.04  } & {\small 0.3900 }\tabularnewline
{\small (6)  } & {\small 83  } & {\small 75.60  } & {\small 68.29$\pm$8.50  } & {\small 0.04$\pm$0.09  } & {\small 31.67$\pm$8.47  } & {\small 0.3500 }\tabularnewline
\hline
\end{tabular}}
\end{table}

With a full landscape regeneration for random permutations of the original sequence (preserving nucleotide composition), the average degree of nontrivial ultrametricity decreases to values in the range $67.37\%$--$70.13\%$, which is close to the average values observed for the general sample of 100 reference and 100 random RNAs. The standard deviations of the null distribution in this test are significant (from $8.26\%$ to $9.05\%$), reflecting the additional variability introduced by the full regeneration of the structural ensemble for each permutation. Two-sided p-values for all six sequences fell in the range from $0.05$ to $0.41$. At the standard significance level $\alpha=0.05$, only RNA (1) demonstrates borderline significance ($p=0.0500$), whereas for the remaining five RNAs, the null hypothesis cannot be rejected. This means that, for a fixed nucleotide composition, the degree of nontrivial ultrametricity of the energy landscape is not determined to a statistically significant degree by the specific order of nucleotides for most of the studied RNAs with high ultrametricity.

The results of the energy shuffle test (\texttt{energy\_shuffle}) while preserving the adjacency graph are given in Table~\ref{tab_energy}.

\begin{table}[H]
\centering \caption{Results of the energy\_shuffle test (energy permutation preserving the adjacency graph) for six reference RNAs. The null distribution was obtained from 100 realizations. Values at $p<0.01$ are indicated as $p<0.01$.}
\label{tab_energy}{\small{}{}{}}{\small{}%
\begin{tabular}{|c|c|c|c|c|c|c|}
\hline
{\small \#  } & {\small Length  } & {\small$u^{\mathrm{real}}_{\mathrm{nt}}$ (\%)  } & {\small$u^{\mathrm{null}}_{\mathrm{nt}}$ (\%)  } & {\small$u^{\mathrm{null}}_{\mathrm{tr}}$ (\%)  } & {\small$u^{\mathrm{null}}_{\mathrm{non}}$ (\%)  } & {\small$p$ }\tabularnewline
\hline
{\small (1)  } & {\small 72  } & {\small 81.22  } & {\small 57.55$\pm$2.67  } & {\small 0.09$\pm$0.04  } & {\small 42.36$\pm$2.68  } & {\small$<$0.01 }\tabularnewline
{\small (2)  } & {\small 76  } & {\small 81.42  } & {\small 44.91$\pm$6.55  } & {\small 0.04$\pm$0.04  } & {\small 55.05$\pm$6.55  } & {\small$<$0.01 }\tabularnewline
{\small (3)  } & {\small 77  } & {\small 76.43  } & {\small 43.42$\pm$5.12  } & {\small 0.04$\pm$0.03  } & {\small 56.54$\pm$5.11  } & {\small$<$0.01 }\tabularnewline
{\small (4)  } & {\small 78  } & {\small 79.98  } & {\small 55.20$\pm$4.75  } & {\small 0.04$\pm$0.01  } & {\small 44.76$\pm$4.76  } & {\small$<$0.01 }\tabularnewline
{\small (5)  } & {\small 82  } & {\small 73.77  } & {\small 63.49$\pm$2.12  } & {\small 0.04$\pm$0.03  } & {\small 36.47$\pm$2.12  } & {\small$<$0.01 }\tabularnewline
{\small (6)  } & {\small 83  } & {\small 75.60  } & {\small 62.66$\pm$4.48  } & {\small 0.03$\pm$0.04  } & {\small 37.31$\pm$4.48  } & {\small$<$0.01 }\tabularnewline
\hline
\end{tabular}}
\end{table}

The data in Table~\ref{tab_energy} show that under random energy permutation, the degree of nontrivial ultrametricity significantly decreases to the range $43.42\%$--$63.49\%$, which is substantially lower than the values for real RNAs ($73.77\%$--$81.42\%$). For all six sequences, the two-sided p-value was found to be less than $0.01$. This means that the observed high ultrametricity of real RNAs cannot be explained solely by the topology of the structure graph: the specific distribution of energies over the graph vertices plays a decisive role. The standard deviations of the null distribution $u^{\mathrm{null}}_{\mathrm{nt}}$ are relatively small (from $2.12\%$ to $6.55\%$), indicating the stability of the estimates.

A comparison of the results of the two null tests reveals the following hierarchy of contributions to ultrametricity. Random energy permutation while preserving the graph (\texttt{energy\_shuffle}) reduces $u_{\mathrm{nt}}$ by $10$--$35$ percentage points relative to the real values, and the reduction is highly statistically significant ($p<0.01$ for all RNAs). The null hypothesis \texttt{nt\_shuffle} regarding the destruction of nontrivial ultrametricity of RNAs with a high degree of nontrivial ultrametricity upon changing the order of nucleotides cannot be rejected. This may indicate that the main contribution to the high ultrametricity of the selected RNAs is made by the specific distribution of free energies over the vertices of the structure graph.

\section{Discussion and prospects}

\label{sec_conclusion}

The present work is a methodological study, the main result of which is the construction and validation of a tool for analyzing the ultrametricity of RNA energy landscapes. In this work, a kinetic metric between basins of attraction on the set of secondary structures has been constructed, which takes into account all possible transition paths between basins and the thermodynamic weights of these paths and is based on the spectral decomposition of the symmetrized Kramers transition rate matrix. A computational scheme for ultrametricity analysis has been developed, including an algorithm for automatic separation of physical modes from numerical noise by searching for a spectral gap and a method for processing disconnected structure graphs arising from stochastic sampling from the Gibbs distribution. The performance of the method has been numerically demonstrated in computational experiments: a screening of 100 reference and 100 random RNAs with lengths from 60 to 90 nucleotides, as well as a detailed statistical analysis of six reference RNAs with a high degree of nontrivial ultrametricity. All calculations were performed at the physiological temperature $37^{\circ}\mathrm{C}$ with an energy window of 50 kcal/mol.

The conducted numerical experiments on samples of 100 reference and 100 random RNAs allow the following main conclusions to be formulated. First, both natural and random sequences demonstrate predominantly a moderate and strong degree of nontrivial ultrametricity of the energy landscape in the range $50$--$90\%$. This result holds for the overwhelming majority of the studied RNAs (97\% reference and 96\% random) regardless of their origin and indicates that a high degree of hierarchical organization is, apparently, a universal property of the energy landscapes of RNA secondary structures, at least for sequences of the considered length. Second, individual representatives of reference RNAs demonstrate a strong degree of nontrivial ultrametricity exceeding $80\%$ and reaching $89\%$. These sequences are of particular interest as potential candidates for which the hierarchical organization of the landscape may have functional significance. Third, the results of the \texttt{energy\_shuffle} null test unequivocally indicate that, for a fixed adjacency graph, the degree of nontrivial ultrametricity is decisively determined by the distribution of free energies of the structures, and not merely by the graph topology. Fourth, the results of the \texttt{nt\_shuffle} null test show that, for a fixed nucleotide composition, the degree of nontrivial ultrametricity is not determined to a statistically significant degree by the order of nucleotides for most RNAs with high ultrametricity.

It must be emphasized that within the framework of the present work, the goal was not to link the functional properties of specific RNAs to the observed degree of nontrivial ultrametricity of their energy landscapes. The task of the work was to develop and test a method for quantitative assessment of the hierarchical organization of the energy landscape that does not generate artifactual ultrametricity, as well as to demonstrate the fact that nontrivial ultrametricity in a moderate and strong degree indeed occurs for most RNA representatives. Establishing possible connections between the degree of ultrametricity and biological function, as well as the role of evolutionary selection in shaping the observed hierarchical structure of landscapes, requires further systematic studies on substantially larger samples of sequences with known functional characteristics.

The obtained results allow the following conjecture about the physical nature of the observed high ultrametricity of the kinetic metric. It appears that in most of the studied RNAs, when accounting for possible transition paths between basins, a single path, determined by the single-linkage distance, predominates, and it is precisely this path that makes the main contribution to the kinetic metric. The contribution of all other paths is less significant and is largely suppressed. Such a situation can be realized if the energies of Markov microstates (structures) vary over fairly wide ranges and, on most paths, the edges of the graph between vertices (structures) $p$ and $q$ have a weight significantly exceeding unity, i.e., $|\Delta G|/RT\gg1$, where $\Delta G=G(p)-G(q)$. In this case, out of all contributions to transition probabilities $\sim\exp(-|\Delta G|/RT)$ along the paths, only the contribution of one path -- the min-max path -- is significant. To quantitatively illustrate this assertion, let us estimate the typical scatter of free energies in the system under consideration. At the physiological temperature $T=310.15$ K, the value $RT\approx0.617$ kcal/mol. The characteristic scatter of free energies of structures within a single basin, estimated from the samples of the conducted calculations, ranges from $5$ to $15$ kcal/mol, which gives $|\Delta G|/RT\sim8$--$24$ for most pairs of adjacent structures. The corresponding exponential factors $\exp(-|\Delta G|/RT)$ are of order $10^{-4}$--$10^{-11}$. Given such smallness of transition probabilities, the contribution of alternative (suboptimal) paths passing through a larger number of edges with large energy differences proves to be exponentially suppressed compared to the min-max path, and the kinetic metric turns out to be close to the single-linkage ultrametric, which manifests itself in high values of $u_{\mathrm{nt}}$.

On the other hand, if the energies of Markov microstates vary within limited ranges and $|\Delta G|/RT\sim1$, then other paths should also contribute to the transitions. In this case, the kinetic metric is not close to the single-linkage metric, which leads to a decrease in nontrivial ultrametricity. This is precisely what is observed for those relatively few RNAs whose degree of nontrivial ultrametricity is not high, but moderate or weak. Here, the significant reduction in the degree of nontrivial ultrametricity in RNAs with strong ultrametricity under a fixed adjacency graph and random permutation of structure energies (\texttt{energy\_shuffle} test) can be explained by the fact that such a permutation creates multiple realizations of weights on the adjacency graph in which suboptimal paths begin to be realized, making a significant contribution to the kinetic metric. In other words, random reassignment of energies destroys the specific organization of the landscape in which one min-max path dominates over the others and leads to the emergence of competing paths with comparable probabilities. Other explanations of this phenomenon are also possible, requiring further theoretical analysis.

We also note a number of limitations of the proposed method that dictate directions for its further development.

First, the Kramers formula is used in this work with the lower bound on the barrier height $\max\{G(p),G(q)\}$, which is a standard approximation for models of RNA secondary structure kinetics \cite{hofacker1994,Zuckerman,flamm2000}, in which the exact geometry of the energy surface is inaccessible. Nevertheless, the true barrier may be higher if the transition requires the simultaneous rupture of several pairs or passage through conformations with unfavorable energy. Refining the barrier estimate by explicit search for saddle points on the energy surface (e.g., using transition state optimization methods) would allow for the construction of a more accurate transition rate matrix and, as a consequence, would enhance the physical adequacy of the kinetic metric.

The issue of constructing the characteristic vectors of basins also requires commentary. In this work, to account for the non-uniformity of the probability distribution within a basin, symmetrized characteristic vectors $\mathbf{u}_{a}$ with weights proportional to the square root of the Boltzmann factor $w^{1/2}(p)$ are used directly. Such a representation is physically justified and reflects the real contribution of structures. The symmetrization of the transition rate matrix $K$, described in Section \ref{sec_spectral}, transforms the stationary distribution of the Markov process, and a correct accounting of non-equilibrium kinetics requires the introduction of exactly such symmetrized characteristic vectors. Investigation of the sensitivity of the results to variations in the method of constructing characteristic vectors (e.g., under a large spread of free energies within a single basin) is a subject for further study.

We also note that the choice of the number of requested eigenmodes $m_{\text{req}}$ is a compromise between the completeness of the landscape description and computational costs. In the performed calculations, $m_{\text{req}}=50$ was used, and all requested modes except the stationary one were classified by the filtering algorithm as physical. This indicates that a more detailed analysis may require an increase in $m_{\text{req}}$, which, in turn, necessitates the use of more efficient algorithms for spectral decomposition of sparse matrices.

The question of the stability of the kinetic metric to the volume of stochastic sampling $N$ deserves special attention. Since the complete space of structures $\Omega(S)$ is exponentially large, the matrix $S$ is always built on a finite subset $\mathcal{X}'\subset\Omega(S)$, and its spectrum depends on the composition of this subset. A rigorous proof of the convergence of the sample matrix spectrum to the spectrum of the matrix on the whole $\Omega(S)$ as $N\to|\Omega(S)|$ is not feasible; however, for practical purposes, the stability of the final values of $u_{\mathrm{nt}}$ under reasonable variation of $N$ is more important. Within the framework of this work, such stability is confirmed: for all six studied RNAs, the standard deviation of $u_{\mathrm{nt}}$ over five independent runs with $N_{\max}=100000$ does not exceed $5.5$ absolute percent (Table~\ref{tab_main}). A direct study of the dependence of $u_{\mathrm{nt}}$ on $N$ (e.g., by successively increasing the sample) is a task for further parametric analysis and lies outside the scope of this methodological work.

It is also worth noting that the most resource-intensive stage of the computational scheme outlined in Section \ref{sec_computational} is Stage 5, consisting of the construction of the adjacency graph. As can be easily estimated, the computational complexity of this Stage is $O(N\cdot P\cdot d)$, where $N$ is the number of structures, $P$ is the number of allowed pairs, and $d$ is the average degree of a vertex. For sequences of length $L>200$, this stage requires significant computational resources. Therefore, a necessary condition for applying the method to longer sequences is the development of optimized parallel computing algorithms.

Summarizing the above, it can be concluded that the proposed kinetic metric is a workable tool for analyzing the hierarchical organization of biopolymer energy landscapes, applicable to RNAs up to several hundred nucleotides in length. This method does not generate artifactual ultrametricity, is sensitive to structural features of sequences, and demonstrates stability to statistical fluctuations. The results obtained show that a high degree of nontrivial ultrametricity is a typical, rather than exceptional, property of the energy landscapes of RNA secondary structures in the studied length range. The fundamentally important next steps are the extension of the method to a wider class of sequences of varying length and functional purpose, as well as a systematic verification of the statistical significance of the observed partial ultrametricity relative to a hierarchy of different null models. These studies, relying on the methodological base developed in this work, will allow answering the question of whether the observed hierarchical organization of the landscape is a product of evolutionary selection or a fundamental property of the space of RNA secondary structures.

\section*{Conflict of Interest}

The author declares no conflict of interest.

\section*{Funding}

The study was carried out without external funding sources.

\section*{Data and Code Availability}

The source code of the program is openly available in the Zenodo
repository \cite{zubarev2026_github}.  Additional data
supporting the findings of this study are available from the
author upon reasonable request.

\end{document}